\newtheorem{definition}{Definition}
\newtheorem{theorem}{Theorem}
\newtheorem{proposition}[theorem]{Proposition}
\newcommand{\cand}{\text{ and }}
\newcommand{\raisemath}[1]{\mathpalette{\raisem@th{#1}}}
\newcommand{\raisem@th}[3]{\raisebox{#1}{$#2#3$}}
\newcommand{\bra}{\langle}
\newcommand{\ket}{\rangle}
\newcommand{\fall}[1]{{\forall\,{#1},\ }}
\newcommand{\fexist}[1]{{\exists\,{#1}\,{:}\ }}
\newcommand{\mb}[1]{{\bf #1}}
\newcommand{\mc}[1]{{\mathcal{#1}}}
\newcommand{\mf}[1]{{\mathfrak #1}}
\newcommand{\Mes}{\mathop{\mathrm{Mes}}}
\newcommand{\Vect}{\mathop{\mathrm{span}}}
\newtheorem{corollary}[theorem]{Corollary}
\newcommand{\comp}{\mathrel{\mathrm C}}
\newcommand{\sas}{\mathbin{\&}}
\newcommand\sastar{\mathbin{\&^{\scriptstyle\#}}}
\newcommand{\bracc}[1]{\ensuremath{\left[\!\left[ {#1} \right]\!\right]}}
\newcommand{\verif@es}{\blacktriangleright}
\newcommand{\verifies}{\mathrel{\verif@es}}
\newcommand{\verifiestar}{\mathrel{\verif@es^{\!\scriptstyle \#}}}
\newcommand{\verifiesm}[1]{\mathrel{\verif@es_{{\scriptstyle \mathfrak {#1}}}}}
\newcommand{\verifiestarm}[1]{\mathrel{\verif@es^{\!\scriptstyle \#}_{{\!\scriptstyle \mathfrak {#1}}}}}
\begin{document}

\title{Quantum Measurements from a Logical Point of View}

\author{Olivier Brunet \\ \texttt{olivier.brunet at normalesup.org}}

\maketitle

\begin{abstract}
  We introduce a logic modelling some aspects of the behaviour of the measurement process, in such a way that no direct mention of quantum states is made, thus avoiding the problems associated to this rather evasive notion. We then study some properties of the models of this logic, and deduce some characteristics that any model (and hence, any formulation of quantum mechanics compatible with its predictions and relying on a notion of measurement) should verify. The main results we obtain are that in the case of a Hilbert space of dimension at least 3, using a strengthening of the Kochen-Specker theorem, we show that no model can lead to the certain prediction of more than one atomic outcome. Moreover, if the Hilbert space is finite dimensional, then we are able to precisely describe the structure of the predictions of any model of our logic. In particular, we show that all the models of our logic do exactly make the same predictions regarding whether a given sequence of outcomes is possible or not, so that quantum mechanics can be considered complete as long as the possibility of outcomes is considered.
\end{abstract}

As Jaynes puts it so vividly, ``\emph{our present [quantum mechanical] formalism is not purely epistemological; it is a peculiar mixture describing in part realities of Nature, in part incomplete human information about Nature -- all scrambled up by Heisenberg and Bohr into an omelette that nobody has seen how to unscramble}''~\cite{Jaynes:Scramble}.

One origin for theses difficulties is, in our opinion, the excessive reliance of the quantum mechanical formalism on the evasive notion of quantum state. Indeed, in most standard textbooks on quantum mechanics, the presentation of the theory starts with the postulate of its existence, and the rest of the exposition of the theory, including the important mechanism of measure, is based on this very notion. But, we insist, the existence of the quantum state is only postulated, so that the latter is an just abstract mathematical entity with no direct experimental counterpart, and hence doesn't have a clear status between being ontological or epistemological. This causes the aforementioned difficulties of interpretation which spread to other notions, such as that of measurement and the associated ``measurement problem''.

Yet, experimentally, the actual data that are obtained and dealt with are solely measurement outcomes and, in practice, any prediction regarding quantum mechanics is expressed in terms of measurement outcomes. This suggests, in order to gain a better understanding of quantum mechanics, to reverse the perspective by considering measurement outcomes as the primary component of the theory, instead of quantum states. This, of course, is not a new position regarding quantum mechanics. For instance, in \cite{Rovelli96RQM}, Rovelli states that ``one can take the view that outcomes of measurements are the physical content of the theory, and the quantum state is a secondary theoretical construction'' and refers back to Heisenberg and Bohr.

In this article, we introduce an approach using the tools of formal logic and based on 
a relation, denoted $ \Mes(s, \mc O, p, t) $, which intended meaning is that a system labelled by $s$ has been measured with an observable $\mc O$, yielding outcome $p$, and that $t$ labels the resulting system. 
%
%
%
We shall, for instance, consider the statement
\begin{equation}
\fall {s, \mc O, t} \Mes(s, \mc O, p, t) \implies \fexist {\mc O', u} \Mes(t, \mc O', p, u) \label{eq:song_zero}
\end{equation}
which indicates that outcome $p$ is always possible when measuring a system (labelled here by $t$) obtained by a measurement which itself yielded outcome $p$. Here, the part ``$\Mes(s, \mc O, p, t)$'' can be interpreted as a preparation phase, with $t$ labelling the prepared system, and ``$\fexist {\mc O', u} \Mes(t, \mc O', p, u)$'' is a prediction regarding this prepared system.

Another typical example is the following, where $p$ and $q$ are two mutually orthogonal outcomes:
\begin{equation}
\fall {s, \mc O, t} \Mes(s, \mc O, p, t) \implies \neg\bigl(\fexist {\mc O', u} \Mes(t, \mc O', q, u)\bigr) \label{eq:first_song}
\end{equation}
This corresponds to the fact that one cannot obtain two mutually orthogonal outcomes $p$ and $q$ in a row when measuring the same system. If this statement is true for any $q$ orthogonal to $p$, this impossibility can also be expressed by saying that if $t$ has been prepared by $\Mes(s, \mc O p, t)$, then $p$ is a certain outcome for~$t$: measuring it with an observable having $p$ as possible outcome will certainly yield outcome~$p$, since equation~\eqref{eq:first_song} states that any other outcome is impossible.

If we drop the labels refering to the systems and the observables, the previous two propositions can be interpreted as follows: the former corresponds to the fact that the sequence of outcomes $(p, p)$ is possible while the latter reflects the impossibility of the sequence $(p, q)$ as soon as $p$ and $q$ are orthogonal.



\medskip

Let us point out that we intentionally leave probabilities aside and only focus on the possibility or impossibility of obtaining a particular outcome. To that respect, an outcome $p$ is impossible (resp.\ certain, possible) if its probability equals $0$ (resp.\ equals $1$, is nonzero). Our approach can be seen as a \emph{possibilistic} one, using the term coined by Fritz \cite{Fritz:Possibilistic}.

\medskip

Our study will proceed as follows. First, we shall identify some properties involving the ``$\Mes$'' relation, and we will use quantum states so as to ensure that these properties are compatible with the prediction of quantum mechanics. However, we insist again on the fact that in the end, the obtained properties will be expressed using the ``$\Mes$'' construction only, so that there remains no reference whatsoever to quantum states. 
We will then study the way one can assign some meaning, some semantics to these sentences, which is called a \emph{model} of the theory~\cite{Hodges:Shorter,Marker:Introduction,Marker:ModelTheory}. Obviously, the orthodox formulation of quantum mechanics based on quantum states will provide such a model but, more generally, any theory attempting to formalize quantum mechanics (and compatible with its predictions) and involving a notion of measurement shall lead to a model of this logical theory. Thus, the study of these models will allow us to address some questions such as whether the ``standard'' model (based on quantum states) has any particular status, or whether there exists other models leading to different possibilistic predictions.


\paragraph{A note on the formalism} In this article, we will only consider \emph{projective} measurements. Such a measurement can be written as
$\sum_i \lambda_i \Pi_i$ with $\sum_i \Pi_i = \mb 1_{\mc H}$ and $\fall {i,j} \Pi_i \Pi_j = \delta_{i,j} \Pi_i$, i.e.\ with each $\Pi_i$ being an orthogonal projector on some closed subspace. Moreover, following a logical approach, we will focus on the eigenspaces of such a projective measurement, rather on its eigenvalues.

Formally, given a Hilbert space $\mc H$ (or, more precisely, a Hilbert lattice $L(\mc H)$ consisting of the closed subspaces of $\mc H$) or, more generally, an \emph{orthomodular lattice} $L$, we define a (finite) \emph{observable} of $L$ as a finite subset $\mc O = \{p_1, p_2, \ldots, p_n\}$ of $L$ such~that
$$ \fall i p_i \neq \bot \qquad \quad \fall {i \neq j} p_i \leq {p_{\!j}}^\bot \qquad \hbox{and} \qquad \bigvee_{i = 1}^n p_i = \top $$
In the case where $L$ is the lattice $L(\mc H)$ associated to a Hilbert space $\mc H$, this definition corresponds to the set of eigenspaces of a Hermitian operator with finitely many eigenvalues. We invite the reader to refer to \cite{DallaChiara2001QuantumLogic,Svozil98Book,PlatoQuantLog} for further informations on these algebraic structures. In the following, $\mc M(L)$ will denote the set of finite observables of $L$.

\medskip

Furthermore, according to modern treatments of quantum mechanics (especielly in the field of quantum information and quantum computation), the most general kind of measurement one can perform on a quantum system is a P.O.\!V.M.~\cite{NielsenChung2000Book,Peres2002:QuantumTheory} and not a mere projective measurement. However, Neumark's theorem tells us that any P.O.\!V.M.\ can be realized by extending the used Hilbert space to a larger one, and then performing a projective measurement on this larger space. As a consequence, it is sufficient to only consider projective measurements in our formal study\footnote{Actually, the rigorous logical treatment of composite quantum systems and their formal counterparts goes far beyond the scope of this article, and will be developped in a future article.}.

\section{A Logic for Measurement}

Let us return to the statement that if a quantum system is measured twice in a row, one cannot obtain two mutually orthogonal outcomes, regardless which observables were measured, i.e. if $q \leq p^\bot$, then
\begin{equation} \label{eq:start_1}
\fall {s, \mc O, t} \Mes(s, \mc O, p, t) \implies \neg\bigl(\fexist {\mc O', u} \Mes(t, \mc O', q, u)\bigr)
\end{equation}


This obviously holds in orthodox quantum mechanics for projective measurements, as a consequence of the Born rule and the projection postulate. More precisely, if a system labelled by $s$ is in a state $|\varphi\ket$, and if $\Mes(s, p, t)$, then $\Pi_p |\varphi\ket \neq \mathopen| 0 \ket $ (otherwise outcome $p$ would not be possible) and the state $|\psi\ket$ of $t$ is colinear to $\Pi_p |\varphi\ket$. Now, since $q \leq p^\bot$, this implies that $\Pi_q \Pi_p = 0$, so that $\Pi_q |\psi\ket = \mathopen| 0 \ket $ and hence $q$ is not a possible outcome. Here, for $p \in L(\mc H)$, $\Pi_p$ obviously denotes the orthogonal projection on $p$.

We insist again on the fact that even though the justification of~\eqref{eq:start_1} relies on the notion of quantum states, it is stated in such a way that does not involve those: it is a statement regarding measurement outcomes only, and the previous justification ensures that it is consistent with the predictions of orthodox quantum mechanics.
This first property suggests the following definition:

\begin{definition}[Verification Statement]
For all $p \in L$ and $s \in S$, we define $s \verifies p$~by\footnote{A ``$\Delta$'' on top of an equality or an equivalence indicates a definition.}
$$ s \verifies p \stackrel \Delta \iff \neg\bigl( \fexist {\mc O, t} \Mes(s, \mc O, p^\bot, t)\bigr) $$
In that case, we will say that $s$ \emph{verifies} $p$.
\end{definition}
With this definition, statement~\eqref{eq:start_1} becomes
\begin{equation}
\fall {p, s, \mc O, t, q} \quad p \leq q \cand \Mes(s, \mc O, p, t) \implies t \verifies q \label{eq:start_2}
\end{equation}
Let us explore some other properties regarding measurements.
\paragraph{Valid Outcomes} The least element $\bot$ of $L$ (if $L$ is a Hilbert lattice, this corresponds to the nullspace $\{\mathopen|0\ket\}$) cannot be obtained as an outcome, that~is
$$ \fall {s, \mc O, t} \neg \Mes(s, \mc O, \bot, t). $$
Moreover, $\bot$ is \emph{the only} element of $L$ which cannot be obtained as an outcome:
$$ \fall p \bigl(\fall {s, \mc O, t} \neg \Mes(s, \mc O, p, t)\bigr) \implies p = \bot $$
Equivalently, 
these statements can be expressed respectively as $ \fall s s \verifies \top $ and $ \fall {p \neq \top} \fexist s \neg (s \verifies p) $.

\paragraph{Measurability} Any system is likely to be measured, so that for every system $s$ and for every observable~$\mc O$, at least one outcome has to be possible, which we temporarily write~as
\begin{equation}
\fall {s, \mc O} \fexist {p, t} \Mes(s, \mc O, p, t). \label{eq:measurability_1}
\end{equation}

\paragraph{Possibilistic Noncontextuality} The next property can be expressed as follows: If an outcome $p$ is certain (resp.\ impossible) if measuring $s$ with a particular observable containing $p$, then it is also certain (resp.\ impossible) if measuring $s$ with \emph{any} observable containing $p$. In orthodox quantum mechanics, this observation follows from the Born rule, which states that the probability of obtaining an outcome $p$ when measuring a quantum system in normalized state $|\varphi\ket$~is $ \bra \varphi | \Pi_p | \varphi \ket $.
In particular, this probability does not dependent on which observable is actually measured, so that in our possibilistic approach, the impossibility or certainty of an outcome (corresponding to probabilities $0$ and $1$, respectively) is independent of the measured observable. We call this property \emph{possibilistic noncontextuality}, where \emph{noncontextuality} refers to the independence w.r.t.\ which observable is measured, and \emph{possibilistic} to the fact that we only consider the noncontextuality of certain and impossible properties. This should not be confused the stronger notion of noncontextuality usually associated to results such as the Kochen-Specker theorem. Let us derive some consequences from this property.

First, suppose that $p, q \in L$ are such that $p \leq q$, that $s \verifies p$ and let us consider observable $\mc O = \{p, q^\bot, p^\bot \wedge q\}$. With the assumption that $s \verifies p$, measuring $s$ with observable $\mc O$ would yield outcome $p$ with certainty (as both $q^\bot$ and $q \wedge p^\bot$ are lower that $p^\bot$) so that $q^\bot$ is not possible and, from noncontextuality, it is not possible to obtain $q^\bot$ as outcome when measuring $s$ (regardless of the measured observable), i.e.\ $s$ verifies $q$. We thus have shown~than
\begin{equation}
\fall {p \leq q} \fall s \quad s \verifies p \implies s \verifies q. \label{eq:contextuality_order}
\end{equation}

Suppose now that $p, q \in L$ are compatible, that $s \verifies p$ and $s \verifies q$ and consider observable $\mc O = \{p \wedge q, p^\bot, p \wedge q^\bot\}$. Measuring $s$ with $\mc O$ would yield $p \wedge q$ with certainty since $p^\bot$ is orthogonal to $p$ and is thus impossible as follows from $s \verifies p$, and $p \wedge q^\bot$ is also impossible, being orthogonal to $q$. But again, 
$p \wedge q$ is also certain if measuring $s$ with observable $\mc O' = \{p \wedge q, (p \wedge q)^\bot\}$ so that $(p \wedge q)^\bot$ is impossible w.r.t.\ $\mc O'$ and, using noncontextuality a last time, w.r.t.\ any observable. Thus, we~have
\begin{equation}
\fall {p \comp q} \fall s \quad s \verifies p \cand s \verifies q \implies s \verifies p \wedge q. \label{eq:compatible_meet}
\end{equation}
This property can be stated in a simpler way, using the Sasaki projection\footnote{The Sasaki projection is defined as $p \sas q \stackrel \Delta = q \wedge (p \vee q^\bot)$ and is the lattice-theoretical equivalent of the orthogonal projection.}. Indeed, if $s \verifies p$ and $s \verifies q$ (without the assumption that they are compatible), then $s \verifies p \vee q^\bot$ so that $s \verifies q \wedge (p \vee q^\bot)$ since $q$ and $p \vee q^\bot$ are compatible. As a consequence, equation~\ref{eq:compatible_meet} can equivalently be stated~as
$$ \fall {p, q \in L} \fall s \quad s \verifies p \cand s \verifies q \implies s \verifies p \sas q. $$

\medskip

Possibilistic noncontextuality allows us to reexpress some previous properties in a simpler way. For instance, using \eqref{eq:contextuality_order}, equation~\eqref{eq:start_2} can be replaced~by
$$ \fall {s, \mc O, p, t} \Mes(s, \mc O, p, t) \implies t \verifies p. $$
Similarly, measurability can be expressed~as $ \fall {s, \mc O} \neg\bigl(\fall {p \in \mc O} s \verifies p^\bot\bigr) $. But an observable $\mc O \in \mc M(L)$ is a finite collection of mutually orthogonal (and hence compatible) elements of $L$ so that, from equation~\eqref{eq:compatible_meet}, we deduce
$$ \bigl(\fall {p \in \mc O} s \verifies p^\bot\bigr) \iff s \verifies \bigwedge \bigl\{p^\bot \bigm| p \in \mc O \bigr\} \iff s \verifies \bot $$
Thus, measurability simply becomes: $ \fall s \neg(s \verifies \bot) $.

\paragraph{Compatible Preservation} We now present a last property, relating the verification of properties before and after a measurement. In order to express it, let us first translate verification statements in terms of quantum states. An outcome $p^\bot$ is impossible for a system in state $|\varphi\ket$ if its probability is $0$, that is, using the Born rule, if $\Pi_{p^\bot} |\varphi \ket = \mathopen| 0 \ket $. This means that if $s$ is in state $|\varphi\ket$, then that one~has\footnote{This ``translation'' will become clearer and more rigourous once we have introduced model-theoretic elements.} $ s \verifies p \iff |\varphi\ket \in p $.

Suppose then that $s \verifies p$ and $\Mes(s, \mc O, q, t)$ with $p$ and $q$ compatible. System $t$ is then in a state $|\psi\ket$ colinear to $\Pi_q |\varphi\ket$. But having $s \verifies p$ means that $|\varphi\ket \in p$ so that $\Pi_p |\varphi\ket = |\varphi\ket$, and the compatibility of $p$ and $q$ means that $\Pi_p$ and $\Pi_q$ commute. As a consequence,
$ \Pi_p \Pi_q |\varphi\ket = \Pi_q \Pi_p |\varphi\ket = \Pi_q |\varphi\ket $
and, similarly, $\Pi_p |\psi\ket = |\psi\ket$ so that $t \verifies p$. This shows that the following property is compatible with the predictions of orthodox quantum mechanics:
$$ \fall {p \comp q} \quad \fall {s, \mc O, t} s \verifies p \cand \Mes(s, \mc O, q, t) \implies t \verifies p $$
We call this \emph{compatible preservation} since the verification of an element $p \in L$ is preserved during a measurement, provided that its outcome $q$ is compatible with $p$. Again, considering possibilistic noncontextuality, this statement can be rewritten~as
$$ \fall {p, q} \quad \fall {s, \mc O, t} s \verifies p \cand \Mes(s, \mc O, q, t) \implies t \verifies p \sas q $$

\medskip

We summarize all these properties by defining the following logical theory:
\begin{definition}
Given an orthomodular lattice $L$, we define $\mc T_L$ as the theory consisting of the following axioms and axiom schemata:
\begin{subequations}
\begin{align}
& \fall s s \verifies \top \label{eq:verifies_top} \\
& \fall s \neg(s \verifies \bot) \label{eq:not_verifies_bot} \\
\hbox{for all $p \neq \top$} \quad & \fexist s \neg (s \verifies p) \label{eq:is_outcome} \\
\hbox{for all $p, q \in L$ such that $p \leq q$,} \quad & \fall s s \verifies p \implies s \verifies q \label{eq:noncontextuality_1} \\
\hbox{for all $p, q \in L$,} \quad & \fall s s \verifies p \cand s \verifies q \implies s \verifies p \sas q \label{eq:noncontextuality_2} \\
\hbox{for all $p, q \in L$,} \quad & \fall s \fall t s \verifies p \cand \Mes(s, q, t) \implies t \verifies p \sas q \label{eq:compatible_preservation} 
\end{align}
\end{subequations}
\end{definition}
%


\section{Models}

All the axioms that constitute $\mc T_L$ are only sequences of characters, i.e.\ syntactical objects. They describe in mathematical terms some properties that ``$\Mes$'' should verify. In order to give assign a meaning to these sentences, we need to consider a structure made of a set $A$, called the \emph{universe} and a  relation $M$ reflecting the syntactical construction ``$\Mes(s, \mc O, p, t)$''. Let us first note that this construction actually only appears in the form ``$\fexist{\mc O} \Mes(s, \mc O, p, t)$'', so that $M$ can in fact be defined as a ternary relation on $A \times L \times A$, any explicit reference to an observable being unnecessary. It can also be remarked that such a structure $(A, M)$ can be seen as a labelled directed graph, with $A$ being its set of vertices, and with $(a, p, b) \in M$ denoting an arrow from $a$ to $b$ labelled by $p \in L$.

Intuitively, a graph $\mf G = (A, M)$ verifies a sentence $\varphi$ if and only if the graph verifies the translation of $\varphi$ in terms of $A$ and $M$. In this case, we write $\mf G \models \varphi$. For instance, one~has
$$ \mf G \models \fall {p \in L} \fall {s, t} \bigl(\fexist{\mc O} \Mes(s, \mc O, p, t)\bigr) \implies \fexist {\mc O', u} \Mes(t, \mc O', p, u) $$
if and only if it is true that $ \fall {p \in L} \fall {a, b \in A} M(a, p, b) \implies \fexist {c \in A} M(b, p, c)$.

\begin{definition}
A graph $\mathfrak G = (A, M)$ is a \emph{model} of $\mc T_L$ if and only if it verifies every axiom of $\mc T_L$. In that case, we write $ \mathfrak G \models \mc T_L$.
\end{definition}

While $\mc T_L$ is a set of syntactical elements, a model of $\mc T_L$ is an actual set equipped with an actual relation, i.e.\ an actual directed graph labelled by elements of $L$ in which the properties expressed by $\mc T_L$ do hold. With such a model, elements of $A$ can be seen as specifications of quantum states. To illustrate this, let us first introduce the model corresponding to the orthodox approach to quantum mechanics.

\begin{definition} [Hilbert Graph]
Given a Hilbert space $\mc H$, we define the \emph{Hilbert graph} $\mathfrak H_{\mc H} = (A_{\mf H}, M_{\mf H})$ by putting
$$ A_{\mf H} \stackrel \Delta = \bigl\{ |\varphi\ket \in \mc H \bigm| \bigl\| |\varphi\ket \bigr\| = 1 \bigr\} \quad \hbox{and} \quad M_{\mf H} \bigl(|\varphi\ket, p, |\psi\ket\bigr) \stackrel \Delta \iff \bigl\| \Pi_p | \varphi \ket \bigr\| \neq 0 \cand |\psi \ket = \frac {\Pi_p | \varphi \ket} {\bigl\| \Pi_p | \varphi \ket \bigr\|} $$
\end{definition}

\begin{proposition}
For every Hilbert space $\mc H$, $ \mathfrak H_{\mc H} \models \mc T_{L(\mc H)} $.
\end{proposition}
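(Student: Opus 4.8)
The plan is to reduce each axiom of $\mc T_{L(\mc H)}$ to an elementary statement about membership of a unit vector in a closed subspace. The key observation, which makes precise the informal ``translation'' used earlier in the discussion of compatible preservation, is that in the Hilbert graph one has $|\varphi\ket \verifies p \iff |\varphi\ket \in p$ for every unit vector $|\varphi\ket$ and every $p \in L(\mc H)$. To see this, unfold the definition of $\verifies$ and then of $M_{\mf H}$: the statement $\fexist{\mc O, t} \Mes(|\varphi\ket, \mc O, p^\bot, t)$ holds in $\mathfrak H_{\mc H}$ exactly when $\Pi_{p^\bot}|\varphi\ket \neq |0\ket$, since in that case the normalized vector $\Pi_{p^\bot}|\varphi\ket / \bigl\|\Pi_{p^\bot}|\varphi\ket\bigr\|$ witnesses the existential quantifier, and no witness exists otherwise; hence $|\varphi\ket \verifies p$ iff $\Pi_{p^\bot}|\varphi\ket = |0\ket$, i.e.\ iff $|\varphi\ket \in (p^\bot)^\bot = p$. (The observable $\mc O$ disappears here because, as remarked when $M$ was introduced, $\Mes$ occurs in $\mc T_L$ only under an existential quantifier over observables, so a model is already a ternary relation on $A \times L(\mc H) \times A$.)

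Granting this characterization, the first axioms are immediate. Every unit vector lies in $\top = \mc H$, which gives~\eqref{eq:verifies_top}; no unit vector lies in $\bot = \{|0\ket\}$, which gives~\eqref{eq:not_verifies_bot}; and if $p \neq \top$ then $p^\bot \neq \bot$, so picking a unit vector $|\varphi\ket \in p^\bot$ we get $|\varphi\ket \notin p$ (otherwise $|\varphi\ket \in p \wedge p^\bot = \bot$), which gives~\eqref{eq:is_outcome}. Axiom~\eqref{eq:noncontextuality_1} is just monotonicity of subspace inclusion. For~\eqref{eq:noncontextuality_2}, if $|\varphi\ket \in p$ and $|\varphi\ket \in q$, then $|\varphi\ket \in p \vee q^\bot$ as well, whence $|\varphi\ket \in q \wedge (p \vee q^\bot) = p \sas q$.

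The only axiom needing a short computation is compatible preservation,~\eqref{eq:compatible_preservation}. Assume $|\varphi\ket \verifies p$, so $\Pi_p|\varphi\ket = |\varphi\ket$, and $M_{\mf H}(|\varphi\ket, q, |\psi\ket)$, so $|\psi\ket$ is a nonzero scalar multiple of $\Pi_q|\varphi\ket$. Then $|\psi\ket \in q$; and since $\Pi_q|\varphi\ket = |\varphi\ket - \Pi_{q^\bot}|\varphi\ket$ with $|\varphi\ket \in p$ and $\Pi_{q^\bot}|\varphi\ket \in q^\bot$, we get $\Pi_q|\varphi\ket \in p + q^\bot \subseteq p \vee q^\bot$, hence $|\psi\ket \in p \vee q^\bot$. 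Therefore $|\psi\ket \in q \wedge (p \vee q^\bot) = p \sas q$, i.e.\ $t \verifies p \sas q$.

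I do not anticipate any genuine difficulty: once the membership characterization is established, each axiom is a one- or two-line check. The only points that deserve a little care are the fully explicit passage from the syntactic relation $\Mes$ to the graph relation $M_{\mf H}$ (in particular the elimination of $\mc O$), and, in the proof of~\eqref{eq:compatible_preservation}, the fact that $p + q^\bot$ need not be closed — which is harmless, since the join in $L(\mc H)$ is by definition the closure of the sum and we only ever need to locate one particular vector of $p + q^\bot$ inside $p \vee q^\bot$.
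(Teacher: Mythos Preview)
Your proof is correct and follows the same approach as the paper: establish the characterization $|\varphi\ket \verifiesm H p \iff |\varphi\ket \in p$ and then check the axioms. The paper in fact gives no detailed proof at all, merely stating the result as ``a direct consequence'' of the fact that $\mc T_L$ was designed with quantum states in mind and recalling the membership characterization; your writeup supplies the missing verifications, including the clean argument for~\eqref{eq:compatible_preservation} via $\Pi_q|\varphi\ket = |\varphi\ket - \Pi_{q^\bot}|\varphi\ket \in p + q^\bot \subseteq p \vee q^\bot$.
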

This result is the direct consequence of the fact that quantum states are actually the basic model of quantum mechanics, and that $\mc T_L$ has been defined considering quantum states. We recall that the verification relation~$\verifies$ translates in this model~as $ |\varphi\ket \verifiesm H p \iff |\varphi\ket \in p $.

However, elements of the universe of a model can also be uncomplete descriptions of a state. To illustrate this, we introduce another important model of~$\mc T_L$.

\begin{definition} [Lattice Graph]
Given an orthomodular lattice $L$, the \emph{lattice graph} $\mathfrak L_L= (A_{\mf L}, M_{\mf L})$ is defined by
$ A_{\mf L} \stackrel \Delta = L^\star$ (with $L^\star \stackrel \Delta = L \setminus \{\bot\}$) and $M_{\mf L}(a,p,b) \stackrel \Delta \iff b \leq a \sas p $.
\end{definition}
\begin{proposition}
Given an orthomodular lattice $L$, $ \mathfrak L_L \models \mc T_L $.
\end{proposition}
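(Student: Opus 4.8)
The plan is to reduce the verification relation $\verifiesm L$ of the lattice graph to the lattice order of $L$, and then to dispatch each of the six axioms of $\mc T_L$ by elementary lattice manipulations. First I would unfold the definitions in $\mf L_L$. Since in the graph model $\Mes$ is a ternary relation, $s \verifies p$ translates to $\neg\bigl(\fexist{b \in L^\star} M_{\mf L}(a, p^\bot, b)\bigr)$, that is $\neg\bigl(\fexist{b \in L^\star} b \leq a \sas p^\bot\bigr)$; and since $L^\star = L \setminus \{\bot\}$, such a $b$ exists exactly when $a \sas p^\bot \neq \bot$. Because $a \sas p^\bot = p^\bot \wedge (a \vee p)$, the whole computation comes down to the equivalence
$$ p^\bot \wedge (a \vee p) = \bot \iff a \leq p, $$
whose ``$\Leftarrow$'' direction is immediate ($a \leq p$ forces $a \vee p = p$ and $p^\bot \wedge p = \bot$) and whose ``$\Rightarrow$'' direction is the single place where orthomodularity is genuinely used: from $p \leq a \vee p$, the orthomodular law yields $a \vee p = p \vee \bigl(p^\bot \wedge (a \vee p)\bigr)$, so if the meet on the right is $\bot$ then $a \vee p = p$, i.e.\ $a \leq p$. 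I would then record the clean consequence $a \verifiesm L p \iff a \leq p$ in $\mf L_L$, which mirrors $|\varphi\ket \verifiesm H p \iff |\varphi\ket \in p$ in the Hilbert graph.

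With this characterization in hand, each axiom is a one-line check. Axioms \eqref{eq:verifies_top} and \eqref{eq:not_verifies_bot} amount to ``$a \leq \top$ always'' and ``$a \not\leq \bot$ for $a \in L^\star$''. For \eqref{eq:is_outcome}, given $p \neq \top$ the element $p^\bot$ is nonzero, hence belongs to $L^\star$, and $p^\bot \leq p$ would force $p^\bot = p^\bot \wedge p = \bot$, a contradiction, so $p^\bot$ witnesses the required $\neg(s \verifies p)$. Axiom \eqref{eq:noncontextuality_1} is transitivity of $\leq$. For \eqref{eq:noncontextuality_2}, if $a \leq p$ and $a \leq q$ then $a \leq q$ and $a \leq p \leq p \vee q^\bot$, hence $a \leq q \wedge (p \vee q^\bot) = p \sas q$. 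For \eqref{eq:compatible_preservation}, assume $a \leq p$ and $M_{\mf L}(a, q, b)$, i.e.\ $b \leq a \sas q = q \wedge (a \vee q^\bot)$; from $a \leq p$ and monotonicity of $\vee$ and $\wedge$ one gets $a \sas q \leq q \wedge (p \vee q^\bot) = p \sas q$, hence $b \leq p \sas q$, that is $b \verifiesm L (p \sas q)$.

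The only step with real content --- and therefore the only one I expect could require care --- is the forward direction of the equivalence characterizing $\verifiesm L$, which genuinely rests on the orthomodular identity $x = y \vee (x \wedge y^\bot)$ for $y \leq x$; everything afterwards is bookkeeping with joins, meets and the definition of the Sasaki projection, and uses only the lattice order. I would also note the degenerate case $L = \{\bot\}$: there $A_{\mf L} = \emptyset$ and every axiom holds vacuously, so no nontriviality assumption on $L$ is needed.
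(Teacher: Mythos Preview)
Your proposal is correct and follows exactly the approach of the paper: first establish the characterization $a \verifiesm L p \iff a \leq p$, then observe that each axiom reduces to an elementary order-theoretic fact. The paper's own proof merely asserts this characterization and says the rest ``can also be proven with no difficulties,'' so your write-up is in fact more detailed than the original --- in particular, you spell out the use of orthomodularity in the $\Rightarrow$ direction of the characterization, which the paper leaves implicit.
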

\begin{proof}
Let us first remark~that $a \verifiesm L p \iff a \leq p $.
With this in mind, it is easy to prove that the different sentences of $\mc T_L$ do hold. For instance, sentences \eqref{eq:verifies_top} and~\eqref{eq:not_verifies_bot} simply state that $ \fall {a \in L^\star} \bot < a \leq \top$. 
The other formulas can also be proven with no difficulties.
\end{proof}

The way $\mc T_L$ was designed -- with a single relation ``$\Mes$'' indicating the possibility of obtaining a specified outcome -- implies that any interpretation of quantum mechanics that includes a notion of measurement should, in order to comply with the predictions of orthodox quantum mechanics (corresponding in our approach to the model $\mathfrak H_{\mc H}$), lead to a model of $\mc T_L$ for some orthomodular lattice $L$. 

Consider for instance the classical example introduced by Bell in \cite{Bell64}: a system, made of two spin one-half particles $A$ and $B$, is assumed to have its state completely specified by a parameter $\lambda$ belonging to some set $\Lambda$. Measuring the spin of particle $A$ along direction $\vec a$ yields outcome $A(\vec a, \lambda) \in \{+1, -1\}$ and, similarly, measuring $B$ along direction $\vec b$ yields outcome~$B(\vec b, \lambda)$.  Such an approach would provide a model $\mathfrak B$ of $\mc T_{L(\mb C^2 \otimes \mb C^2)}$, with $\Lambda$ as its universe. And even though the relation $M$ is yet unspecified, we can however express some verification statements: for all~$\vec a$ and~$\vec b$,
$$ \lambda \verifiesm B [\,A(\lambda, \vec a)] \otimes [B(\lambda, \vec b)] $$
where $[\,A(\lambda, \vec a)]$ (resp. $[B(\lambda, \vec b)]$) denotes the eigenspace corresponding to the indicated outcome.

Thus, the study of the models of $\mc T_L$ provides a general framework for understanding the meaning of the notion of quantum state and for determining the different ways of specifying (either completely or partially) the state of a quantum system.

\medskip

Before continuing, 
let us make a remark about the definition of our logic and, in particular, the property of possibilistic noncontextuality from which we have deduced equations~\eqref{eq:compatible_meet} and~\eqref{eq:noncontextuality_2}. The two models we have presented so far, $\mf H_{\mc H}$ and $\mf L_{\mc L}$, both verify
\begin{equation}
\fall {p, q \in L} \quad \fall {a \in A} a \verifiesm G p \cand a \verifiesm G q \implies a \verifiesm G p \wedge q \label{eq:wedge}
\end{equation}
i.e.\ both verify a version of equations~\eqref{eq:compatible_meet} where the compatibility requirement has been removed. Thus, one might wonder whether~\eqref{eq:wedge} could be used instead of~\eqref{eq:compatible_meet} and~\eqref{eq:noncontextuality_2}. Translated in terms of the ``$\Mes$'' relation, it states that if $a$ is such that if no outcome orthogonal to $p$ or $q$ can be obtained when measuring a quantum system in a state modelled by $a$, then no outcome orthogonal to $p \wedge q$ can be obtained either. But this is rather peculiar, since if $p$ and $q$ are not compatible, this relates the outcomes of necessarily \emph{distinct} observables, so that equation~\eqref{eq:wedge} seems rather \emph{ad hoc} and unnatural.

On the contrary, if we add the requirement that $p$ and $q$ are compatible, then we have seen during the justification of equation~\eqref{eq:compatible_meet} that one could deduce, considering a single observable, the certainty of outcome $p \wedge q$. 

However, we will show later that under some reasonable conditions which will be detailed later, equation~\eqref{eq:wedge} is necessarily verified by any model of $\mc T_L$. 

\section{Sasaki Filters}

We shall now present some results regarding the set of those properties verified by an element of a model of $\mc T_L$. Formally, given such a model $\mf G = (A, M)$, we define the set of properties verified by $a \in A$~as
$$ \bracc a_{\mf G} = \bigl\{ p \in L \bigm| a \verifiesm G p \bigr\} $$
\noindent We can first remark that $\bracc a_{\mf G}$ is a \emph{consistent Sasaki filter} of $L$. We recall that given an orthomodular lattice, a \emph{Sasaki filter} \cite{Brunet07PLA,Brunet:2009} of $L$ is a subset $F \subseteq L$ verifying the following three properties:

\medskip

\begin{tabular}{rl}
1. & Non-empty: $F \neq \emptyset$ \\[2pt]
2. & Upper set: $p \in F \cand p \leq q \implies q \in F$ \\[2pt]
3. & $\sas$-Stability: $p, q \in F \implies p \sas q \in F$
\end{tabular}

\medskip

It is moreover \emph{consistent} if $\bot \not \in F$, i.e.\ if $F \neq L$. The verification of these properties by a $\bracc a_{\mf G}$ is a direct consequence of the definition of $\mc T_L$, and follow from axioms \eqref{eq:verifies_top}, \eqref{eq:not_verifies_bot}, \eqref{eq:noncontextuality_1} and~\eqref{eq:noncontextuality_2}. We also recall the following theorem from \cite{Brunet07PLA,Brunet:2009}, and an immediate consequence of~it:  
\begin{theorem} \label{theo:Kochen-Specker-Brunet} 
If $\mc H$ is a Hilbert space of dimension at least $3$, then every consistent Sasaki filter of $L(\mc H)$ contains at most one atom, i.e.\ one vector ray.
\end{theorem}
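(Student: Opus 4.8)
The plan is to reduce the statement to the original Kochen–Specker theorem. Suppose, for contradiction, that $F$ is a consistent Sasaki filter of $L(\mc H)$ with $\dim\mc H\geq 3$ containing two distinct atoms $p$ and $q$ (vector rays generated by unit vectors $|\varphi\ket$ and $|\psi\ket$ respectively). The first step is to exploit $\sas$-stability: since $p,q\in F$, we have $p\sas q\in F$. Now $p\sas q = q\wedge(p\vee q^\bot)$; when $p$ and $q$ are distinct atoms, $p\vee q^\bot$ is a coatom not containing $q$ unless $q\leq p\vee q^\bot$, and in fact one computes that $p\sas q = \bot$ whenever $|\varphi\ket$ and $|\psi\ket$ are not colinear (the Sasaki projection of a ray onto a different ray lands in a one-dimensional space only if they coincide; otherwise $q\wedge(p\vee q^\bot)=\bot$). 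Wait — this is too quick: $p\vee q^\bot$ has dimension $\dim\mc H - 1 + 1 = \dim\mc H$ only if $p\not\leq q^\bot$, so generically $p\vee q^\bot=\top$ and $p\sas q=q\neq\bot$. So a single application of $\sas$-stability does not immediately yield $\bot$.

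The correct route is to build, from the single atom-witness, enough structure to invoke Kochen–Specker. Given that $F$ contains the atom $p$, axiom-style closure under $\sas$ together with the upper-set property means that for every ray $r$ with $r\not\perp p$, the filter $F$ contains $p\sas r = r$ (since $p\vee r^\bot=\top$ when $p\not\leq r^\bot$, giving $p\sas r = r\wedge\top = r$). Hence $F$ contains every ray not orthogonal to $|\varphi\ket$. But then, picking any ray $r_0$ orthogonal to $|\varphi\ket$ (possible since $\dim\mc H\geq 2$), choose a ray $r_1$ with $r_1\not\perp|\varphi\ket$ and $r_1\not\perp r_0$ and $r_0\not\leq r_1^\bot$ suitably; applying $\sas$-stability to the two members $r_1\in F$ and... — this is getting delicate, because one must land on $r_0$ via a Sasaki projection from rays already known to be in $F$, and $r_1\sas r_2$ is again a ray only in special position. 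The cleaner statement to prove is: if $F$ contains two distinct atoms $p\neq q$, then $F$ contains a third atom not coplanar with them, and iterating, $F$ contains atoms spanning a dense set of rays; since $F$ is an upper set this forces $F$ to contain two orthogonal rays $p'$, $q'$, whence $p'\sas q' = q'\wedge(p'\vee q'^\bot) = q'\wedge q'^\bot = \bot\in F$ — contradicting consistency.

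So the key steps, in order, are: (1) observe $a\verifies p$ closure gives, for an atom $p\in F$, that $F\supseteq\{r : r\not\perp p\}$ using $p\sas r = r$; (2) if a second distinct atom $q\in F$ exists, use $q\sas(\,\cdot\,)$ likewise to get $F\supseteq\{r: r\not\perp q\}$, so $F$ contains all rays except those simultaneously orthogonal to both $|\varphi\ket$ and $|\psi\ket$; (3) since $\dim\mc H\geq 3$, the set of rays orthogonal to both is nonempty, but now pick a ray $r^\ast$ in that set and a ray $r$ with $r\not\perp|\varphi\ket$ such that the Sasaki projection $r\sas r^{\ast\ast}$ for an appropriate intermediate $r^{\ast\ast}\in F$ equals $r^\ast$ — concretely, choose an orthonormal triple and verify by direct computation that $r^\ast$ is reachable, so $r^\ast\in F$; (4) repeat to show $F$ contains two orthogonal rays and derive $\bot\in F$, the contradiction. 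The immediate consequence referenced ("and an immediate consequence of it") — that a model of $\mc T_{L(\mc H)}$ with $\dim\mc H\geq3$ cannot certainly predict more than one atomic outcome — then follows since each $\bracc a_{\mf G}$ is a consistent Sasaki filter.

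The main obstacle is step (3): the Sasaki projection of one ray onto another is itself a ray only when the two are neither orthogonal nor equal, and even then the image is a specific ray determined by the geometry, so showing that a \emph{prescribed} third ray $r^\ast$ can be obtained as $r_1\sas r_2$ (or a short composition thereof) with $r_1,r_2$ already in $F$ requires a genuine computation in $\mc H$ — essentially solving $q\wedge(p\vee q^\bot)=r^\ast$ for $p$ given $q$ and $r^\ast$. I expect this amounts to: given any two distinct non-orthogonal rays there is a ray $p$ with $p\sas q$ equal to any preassigned ray in the plane $\mathrm{span}(|\varphi\ket,|\psi\ket)$, which is a two-dimensional fact, and then a dimension-counting argument covers all of $\mc H$. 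Once that reachability lemma is in hand, the contradiction with consistency is immediate, and the theorem — being quoted from \cite{Brunet07PLA,Brunet:2009} — can simply be cited rather than reproven in full.
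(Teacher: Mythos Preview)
The paper does not give a proof of this theorem; it simply recalls it from \cite{Brunet07PLA,Brunet:2009}, as you yourself note at the end of your proposal. So there is no in-paper argument to compare against.

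Your sketch, however, contains a genuine error that invalidates steps (1) and (2). You claim that if an atom $p$ lies in $F$, then every ray $r$ with $r\not\perp p$ also lies in $F$, ``using $p\sas r = r$''. The computation $p\sas r = r$ is correct (since $p\not\leq r^\bot$ forces $p\vee r^\bot=\top$), but the inference is not: $\sas$-stability says that $p,q\in F$ implies $p\sas q\in F$ --- it requires \emph{both} arguments to already belong to $F$. From $p\in F$ alone you cannot conclude $p\sas r\in F$ for arbitrary $r$; the identity $p\sas r=r$ is then vacuous for your purposes.

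Worse, if step (1) were valid it would prove too much. Take any atom $p\in F$ and pick two rays $r_1,r_2$ with $r_1\perp r_2$ but $r_1,r_2\not\perp p$ (e.g.\ with $p=\Vect(e_1)$ in an orthonormal frame, take $r_1=\Vect(e_1+e_2)$ and $r_2=\Vect(e_1-e_2)$). Your argument would place both $r_1,r_2$ in $F$, whence $r_1\sas r_2=r_2\wedge r_2^\bot=\bot\in F$, contradicting consistency. But the principal filter $p^\uparrow$ of an atom $p$ \emph{is} a consistent Sasaki filter containing exactly one atom, so no contradiction can arise from a single atom. The actual proof in the cited references is substantially more delicate and genuinely uses the presence of \emph{two} distinct atoms together with a Kochen--Specker-style construction; your reduction collapses that distinction.
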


\begin{corollary} \label{cor:ksb} 
If $\mathfrak G = (A, M)$ is a model of $\mc T_{L(\mc H)}$ with $\dim \mc H \geq 3$, then for all $a \in A$, $\bracc a_{\mathfrak G}$ contains at most one vector ray.
\end{corollary}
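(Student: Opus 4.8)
The plan is to connect the two notions at play: the set $\bracc a_{\mathfrak G}$ of properties verified by a point $a$ of a model, and the structure of consistent Sasaki filters in $L(\mc H)$. The bridge has essentially already been laid in the paragraphs preceding Theorem~\ref{theo:Kochen-Specker-Brunet}: for any model $\mathfrak G = (A,M)$ of $\mc T_L$ and any $a \in A$, the set $\bracc a_{\mathfrak G} = \{ p \in L \mid a \verifiesm G p \}$ is a \emph{consistent Sasaki filter} of $L$. So the corollary is really just the specialization of Theorem~\ref{theo:Kochen-Specker-Brunet} to this particular family of Sasaki filters.

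Concretely, I would proceed as follows. First, fix a model $\mathfrak G = (A,M)$ of $\mc T_{L(\mc H)}$ with $\dim \mc H \geq 3$, and fix an arbitrary $a \in A$. Second, verify that $\bracc a_{\mathfrak G}$ is a consistent Sasaki filter of $L(\mc H)$: non-emptiness follows from axiom~\eqref{eq:verifies_top} (so $\top \in \bracc a_{\mathfrak G}$); the upper-set property is exactly axiom schema~\eqref{eq:noncontextuality_1}; $\sas$-stability is exactly axiom schema~\eqref{eq:noncontextuality_2}; and consistency ($\bot \notin \bracc a_{\mathfrak G}$) is axiom~\eqref{eq:not_verifies_bot}. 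Third, apply Theorem~\ref{theo:Kochen-Specker-Brunet} to the filter $\bracc a_{\mathfrak G}$: since $\dim \mc H \geq 3$, this filter contains at most one atom of $L(\mc H)$, i.e.\ at most one vector ray. Since $a$ was arbitrary, this holds for all $a \in A$, which is precisely the statement of the corollary.

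I do not expect any real obstacle here — the work has been front-loaded into Theorem~\ref{theo:Kochen-Specker-Brunet} (the strengthened Kochen--Specker result) and into the observation that $\bracc a_{\mathfrak G}$ is a consistent Sasaki filter. The only thing requiring a modicum of care is making the dictionary explicit: one must spell out that ``atom of $L(\mc H)$'' and ``vector ray'' are the same thing (an atom of the Hilbert lattice is a one-dimensional subspace, i.e.\ a ray $\mathbb C|\varphi\ket$), so that the conclusion of Theorem~\ref{theo:Kochen-Specker-Brunet} transfers verbatim. One might also note, for completeness, that axiom~\eqref{eq:is_outcome} and the $\Mes$-related axiom~\eqref{eq:compatible_preservation} play no role in this particular argument; only the four axioms listed above are needed. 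The proof is therefore essentially a one-line invocation once the filter structure is acknowledged, and I would present it as such.
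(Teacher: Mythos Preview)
Your proposal is correct and matches the paper's approach exactly: the paper simply presents this corollary as an ``immediate consequence'' of Theorem~\ref{theo:Kochen-Specker-Brunet}, relying on the observation (made just before the theorem) that each $\bracc a_{\mathfrak G}$ is a consistent Sasaki filter. Your write-up is, if anything, more explicit than the paper's, which gives no separate proof at all.
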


This result has important consequences. Recall the possibility of having a hidden-variable model of $\mc T_L$ as the one sketched earlier after Bell's article. The previous corollary simply forbids such a model, since it would lead to the presence of more than one atom in a consistent Sasaki filter. For instance, with the previous notations, given any two non-colinear and non-orthogonal vectors $\vec u$ and $\vec v$, we had for all~$\lambda \in \Lambda$,
$$ \lambda \verifiesm B [\, A(\lambda, \vec u)] \otimes [\, B(\lambda, \vec u)] \quad \hbox{and} \quad \lambda \verifiesm B [A(\lambda, \vec v)] \otimes [B(\lambda, \vec v)] $$
which is precisely ruled out by corollary~\ref{cor:ksb}. More generally, this result rules out models where there exists at least one element in the universe of the model which verifies two distinct atomic outcomes. This includes (but is not restricted to) models involving counterfactual definiteness, and constructions such as the one presented by Bell. 
We can also derive the following result, which implies the Kochen-Specker theorem~\cite{KochenSpecker67,PlatoKS}.
\begin{corollary} 
\label{corr:ks}
If $\mc H$ verifies $3 \leq \dim \mc H$, there is no model $\mf G = (A, M)$ of $\mc T_{L(\mc H)}$ such that there exists an element $a \in A$ 
verifying
$$ \fall {\mc O \in \mc M\bigl(L(\mc H)\bigr)} \fexist {p \in \mc O} a \verifiesm G p $$
\end{corollary}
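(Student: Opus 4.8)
The plan is to argue by contradiction, the whole point being that Corollary~\ref{cor:ksb} already packages the hard combinatorics, so very little extra work should be needed. Suppose $\mathfrak G = (A, M) \models \mathcal T_{L(\mathcal H)}$ with $3 \le \dim \mathcal H$ and that some $a \in A$ satisfies the displayed condition; put $F = \bracc a_{\mathfrak G}$. As observed just before Corollary~\ref{cor:ksb}, $F$ is a consistent Sasaki filter, hence contains \emph{at most one} vector ray, and the condition on $a$ says exactly that $F$ meets every finite observable of $L(\mathcal H)$. A useful preliminary: for each subspace $p$ with $\bot \ne p \ne \top$ the pair $\{p, p^\bot\}$ is an observable, so $p \in F$ or $p^\bot \in F$; and not both, since $p \sas p^\bot = \bot$ while $F$ is $\sas$-stable and, by~\eqref{eq:not_verifies_bot}, does not contain $\bot$. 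Hence exactly one of $p, p^\bot$ lies in $F$.

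Next I would settle the core case $3 \le \dim \mathcal H = n < \infty$, which already yields the Kochen--Specker theorem (take $n = 3$). Every orthonormal basis $(e_1, \dots, e_n)$ gives the finite observable $\{\mathbb C e_1, \dots, \mathbb C e_n\}$, so $F$ contains at least one ray; with Corollary~\ref{cor:ksb} it contains \emph{exactly} one, say $r$. Now pick a second orthonormal basis none of whose lines is $r$ --- e.g.\ from a basis with $\mathbb C e_1 = r$, replace $e_1, e_2$ by $\tfrac{1}{\sqrt2}(e_1+e_2)$ and $\tfrac{1}{\sqrt2}(e_1-e_2)$ (this is where $n \ge 2$ enters). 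Applying the condition to this second basis produces a ray in $F$ other than $r$, a contradiction; so no such $\mathfrak G, a$ exist in finite dimension.

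For infinite $\dim \mathcal H$ the idea is to descend to a finite-dimensional subspace. If $F$ contains some finite-dimensional $q \ne \bot$, choose a $3$-dimensional $W \supseteq q$; then $W \in F$ by upward closure, hence $W^\bot \notin F$. One checks that $F \cap L(W)$ is a consistent Sasaki filter of the Hilbert lattice $L(W)$ --- using that joins, meets and orthocomplements of subspaces of $W$, and therefore the Sasaki projection, agree whether computed in $L(W)$ or in $L(\mathcal H)$, and that $W = \top_{L(W)} \in F$ --- and, by applying the condition to the observables $\mathcal O' \cup \{W^\bot\}$ with $\mathcal O'$ an orthonormal basis of $W$, that $F \cap L(W)$ meets every orthonormal basis of $W$. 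Theorem~\ref{theo:Kochen-Specker-Brunet} inside $L(W)$ (equivalently, the argument of the previous paragraph) then gives a contradiction. It therefore only remains to exclude the possibility that every element of $F$ is infinite-dimensional, and for this one feeds into the hypothesis the observables arising from decompositions $\mathcal H = H_1 \oplus \dots \oplus H_k$ into finitely many infinite-dimensional pieces, combined with the $\sas$-closure properties of $F$ recorded above.

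\textbf{Main obstacle.} The finite-dimensional case is essentially immediate once Corollary~\ref{cor:ksb} is in hand, so the real difficulty is the infinite-dimensional reduction, and within it the delicate step is showing that $F = \bracc a_{\mathfrak G}$ cannot consist solely of ``large'' (infinite-dimensional) subspaces: this is where one must genuinely exploit that the hypothesis demands a \emph{certain} outcome in every observable, not merely a possible one. The remainder is routine bookkeeping about Sasaki filters.
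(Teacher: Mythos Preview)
The paper states Corollary~\ref{corr:ks} without proof; the author evidently regards it as a consequence of Theorem~\ref{theo:Kochen-Specker-Brunet}/Corollary~\ref{cor:ksb}, and the sentence that follows only explains why it implies the Kochen--Specker theorem. Your finite-dimensional argument is correct and is the natural one: once $F=\bracc a_{\mf G}$ is known to be a consistent Sasaki filter with at most one atom, feeding in two orthonormal bases of $\mc H$ (the second chosen to avoid the unique ray selected by the first) forces two distinct rays into $F$. The reduction you sketch when $F$ already contains some nonzero finite-dimensional $q$ is also correct.

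The difficulty you flag in the infinite-dimensional case is, however, real and not merely cosmetic. From what you have extracted --- that $F$ is a consistent Sasaki filter meeting every \emph{finite} observable --- it does not obviously follow that $F$ contains a finite-dimensional element. The natural ``peeling'' manoeuvres (split $p\in F$ as $q\oplus(p\ominus q)$ and use the observable $\{q,\,p\ominus q,\,p^\bot\}$ to push $q$ or $p\ominus q$ into $F$) never terminate when $\dim p=\infty$; and note that the cofinite-codimensional subspaces already form a consistent Sasaki filter containing no atom, so $\sas$-stability alone is not the obstruction. Your proposed closing move --- observables coming from decompositions $\mc H=H_1\oplus\cdots\oplus H_k$ into infinite-dimensional pieces --- only tells you that \emph{one} $H_i$ lies in $F$, which is again infinite-dimensional, so as written it does not close the gap. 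Either a genuinely new idea is required here, or --- as the paper's emphasis on Kochen--Specker and its later systematic restriction to $\dim\mc H<\infty$ suggest --- the infinite-dimensional assertion deserves a separate, more careful treatment than the paper provides.
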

\noindent This shows that theorem~\ref{theo:Kochen-Specker-Brunet} is actually stronger that the Kochen-Specker theorem, since the assumption of noncontextuality has been replaced by possibilistic noncontextuality. 

Another important consequence relates to the position and momentum of a particle. Heisenberg's uncertainty principle teaches us that the position and the momentum of a particle cannot be known simultaneously. Theorem~\ref{theo:Kochen-Specker-Brunet} actually goes further, by stating that no model compatible with the predictions of orthodox quantum mechanics can simultaneously specify the position and the momentum of a particle, independently of whether these position and momentum are known (whatever is meant by \emph{being known}).

\section{A Representation Theorem in Finite Dimension}


We now present another result regarding the models of $\mc T_{L(\mc H)}$ which applies to 
Hilbert spaces $\mc H$ such that $3 \leq \dim \mc H < \infty$. Let us start by some elementary bilinear algebra. For legibility reasons, elements of $L(\mc H)$ will be denoted using capital letters and vectors with lowercase letters even though previously, elements of an orthomodular lattice were denoted using lowercase letters. Moreover, for $P \in L(\mc H)$ (so that it is a closed subspace of $\mc H$), let $\Pi_P$ denotes the orthogonal projection on $P$.

\begin{proposition}
Given two closed subspaces $P$ and $Q$ of a Hilbert space, the restriction to $P$ of \/ $\Pi_P \circ \Pi_Q$, denoted $\Pi_P \circ \Pi_Q|_P$, is self-adjoint.
\end{proposition}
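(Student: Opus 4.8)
The plan is to work directly with the sesquilinear form attached to the operator. Write $T = \Pi_P \circ \Pi_Q|_P$, regarded as an operator on the closed subspace $P$, which is itself a Hilbert space with the inherited inner product. First I would observe that $T$ is well-defined and bounded as an endomorphism of $P$: the range of $\Pi_P$ is exactly $P$, so $T$ indeed maps $P$ into $P$, and $\|T\| \leq \|\Pi_P\|\,\|\Pi_Q\| \leq 1$. Hence it makes sense to ask whether $T$ is self-adjoint on $P$, i.e.\ whether $\langle Tx, y\rangle = \langle x, Ty\rangle$ for all $x, y \in P$.

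The heart of the argument is then a three-step chain of equalities for $x, y \in P$. Starting from $\langle Tx, y\rangle = \langle \Pi_P \Pi_Q x, y\rangle$, I would use that $\Pi_P$ is self-adjoint together with $y \in P$, hence $\Pi_P y = y$, to rewrite this as $\langle \Pi_Q x, \Pi_P y\rangle = \langle \Pi_Q x, y\rangle$. Next, self-adjointness of $\Pi_Q$ gives $\langle \Pi_Q x, y\rangle = \langle x, \Pi_Q y\rangle$. Finally, using $x \in P$, hence $\Pi_P x = x$, together with self-adjointness of $\Pi_P$ once more, I obtain $\langle x, \Pi_Q y\rangle = \langle \Pi_P x, \Pi_Q y\rangle = \langle x, \Pi_P \Pi_Q y\rangle = \langle x, Ty\rangle$. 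This establishes the claim.

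There is no genuine obstacle here: the statement is an elementary consequence of the self-adjointness of orthogonal projections together with the fact that $\Pi_P$ restricts to the identity on $P$. The only point worth flagging — and the one that really carries the content — is that self-adjointness is asserted for the \emph{restricted} operator on the Hilbert space $P$, not for $\Pi_P \Pi_Q$ on all of $\mc H$ (which is generally not self-adjoint unless $P$ and $Q$ are compatible); keeping track of which inner-product space one is working in is exactly what makes the two applications of $\Pi_P x = x$ and $\Pi_P y = y$ legitimate.
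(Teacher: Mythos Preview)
Your argument is correct. The paper itself states this proposition without proof, treating it as elementary bilinear algebra, so there is no ``paper's own proof'' to compare against; your computation is exactly the standard verification one would supply, and your remark that $\Pi_P\Pi_Q$ need not be self-adjoint on all of $\mc H$ unless $P$ and $Q$ are compatible is precisely the content the paper exploits in the next proposition.
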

%
As a consequence, if $\mc H$ is finite dimensional, which we will now assume, then $P$ admits an orthonormal basis made of eigenvectors of $\Pi_P \circ \Pi_Q |_P$. 
It directly follows~that: 
%
%
\begin{proposition}
Two subspaces $P$ and $Q$ are compatible if, and only if the spectrum of \/ $\Pi_P \circ \Pi_Q|_P$ verifies
$$ \mathrm{sp}(\Pi_P \circ \Pi_Q|_P) \subseteq \{0,1\}. $$
\end{proposition}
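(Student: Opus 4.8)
The plan is to work with the self-adjoint operator $T := \Pi_P \circ \Pi_Q|_P$ on the finite-dimensional space $P$, which the previous proposition guarantees is diagonalisable in an orthonormal basis of $P$. The first step is the elementary observation that for every $x \in P$ one has $\langle T x, x\rangle = \langle \Pi_P \Pi_Q x, x\rangle = \langle \Pi_Q x, \Pi_P x\rangle = \langle \Pi_Q x, x\rangle = \|\Pi_Q x\|^2$, using $\Pi_P x = x$ and the self-adjointness of $\Pi_P$. Since $0 \le \|\Pi_Q x\|^2 \le \|\Pi_Q x\|^2 + \|\Pi_{Q^\bot} x\|^2 = \|x\|^2$, the operator $T$ is positive and bounded above by $I_P$, so its spectrum always lies in $[0,1]$; the content of the statement is therefore that the eigenvalues avoid the open interval $(0,1)$ exactly when $P \comp Q$.

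For the direction ($\Leftarrow$), assume $\mathrm{sp}(T) \subseteq \{0,1\}$. Being self-adjoint with spectrum in $\{0,1\}$, $T$ is the orthogonal projection of $P$ onto $\ker(T - I_P)$ along $\ker T$, so $P = \ker T \oplus \ker(T - I_P)$ orthogonally. I would then identify these two eigenspaces using the identity above: $T x = 0 \iff \|\Pi_Q x\|^2 = 0 \iff x \in Q^\bot$, hence $\ker T = P \wedge Q^\bot$; and for $x \in P$ the equality $\|x\|^2 = \langle T x, x\rangle = \|\Pi_Q x\|^2$ forces $\Pi_{Q^\bot} x = 0$, i.e.\ $x \in Q$, so $\ker(T - I_P) = P \wedge Q$ (the reverse inclusions being immediate). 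Hence $P = (P \wedge Q) \vee (P \wedge Q^\bot)$, which is one of the standard equivalent characterisations of $P$ being compatible with $Q$ in $L(\mc H)$.

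For the direction ($\Rightarrow$), assume $P \comp Q$. Then $P = (P \wedge Q) \oplus (P \wedge Q^\bot)$ orthogonally, and on the first summand $\Pi_Q$ acts as the identity, so $T x = \Pi_P x = x$ there, while on the second summand $\Pi_Q x = 0$, so $T x = 0$. Diagonalising $T$ in an orthonormal basis adapted to this decomposition gives $\mathrm{sp}(T) \subseteq \{0,1\}$.

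The routine parts are the quadratic-form computation and the eigenspace bookkeeping; the only point that deserves care is the passage between the algebraic condition $P = (P \wedge Q) \vee (P \wedge Q^\bot)$ and the lattice-theoretic notion of compatibility $P \comp Q$ — a classical fact about Hilbert lattices (compatibility $\iff$ commutation of the projectors $\iff$ the above decomposition), which I would simply invoke rather than reprove.
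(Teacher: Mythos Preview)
Your argument is correct. The paper does not actually supply a proof of this proposition: it only remarks that, since $\Pi_P \circ \Pi_Q|_P$ is self-adjoint on the finite-dimensional space $P$, an orthonormal eigenbasis exists, and then states that the proposition ``directly follows''. Your write-up is precisely the kind of filling-in the paper leaves to the reader, and the route via the quadratic form $\langle T x, x\rangle = \|\Pi_Q x\|^2$ is a clean way to identify the $0$- and $1$-eigenspaces as $P \wedge Q^\bot$ and $P \wedge Q$; this is essentially the same idea the paper is gesturing at, just made explicit.
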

Let now $\mf G = (A, M)$ be a model of $\mc T_{L(\mc H)}$. If we define $\bracc a_{\mf G} \sastar Q = \bigl\{ P \sas Q \bigm| P \in \bracc a_{\mf G} \bigr\}$, it is clear from axiom~\eqref{eq:compatible_preservation} that if $M(a, P, b)$, then $\bracc a_{\mf G} \sastar P \subseteq \bracc b_{\mf G}$.
\begin{proposition} \label{prop:not_not_principal}
If $P$ and $Q$ are two distinct incompatible elements of $\bracc a_{\mf G}$, then neither $P$ nor $Q$ are minimal in~$\bracc a_{\mf G}$.
\end{proposition}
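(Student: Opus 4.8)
The plan is to argue by contradiction: suppose $P$ and $Q$ are distinct, incompatible, both lie in $\bracc a_{\mf G}$, and (WLOG) $P$ is minimal in $\bracc a_{\mf G}$. The strategy is to produce a new element of $\bracc a_{\mf G}$ that lies strictly below $P$, contradicting minimality. The natural candidate is built from the Sasaki projection: since $P, Q \in \bracc a_{\mf G}$ and $\bracc a_{\mf G}$ is $\sas$-stable (axiom~\eqref{eq:noncontextuality_2}), we have $Q \sas P \in \bracc a_{\mf G}$, and then $(Q \sas P) \sas P \in \bracc a_{\mf G}$ as well, and so on. Each of these is $\leq P$ by definition of $\sas$ (recall $P \sas X = X \wedge (P \vee X^\bot) \leq X$... here I want the versions landing inside $P$). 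So the first task is to exhibit, from the iterated Sasaki projections between $P$ and $Q$, an element of $\bracc a_{\mf G}$ that is $\leq P$ but \emph{strictly} below $P$; minimality of $P$ would then force that element to be $\bot$, contradicting consistency of $\bracc a_{\mf G}$ (axiom~\eqref{eq:not_verifies_bot}) — \emph{unless} the element happens to equal $P$ itself.

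Here is where the bilinear-algebra machinery just set up enters. Consider the operator $T = \Pi_P \circ \Pi_Q|_P$ on the finite-dimensional space $P$; it is self-adjoint with spectrum in $[0,1]$, and by the proposition just stated, $P$ and $Q$ are incompatible exactly when $\mathrm{sp}(T) \not\subseteq \{0,1\}$, i.e.\ $T$ has an eigenvalue $\lambda \in (0,1)$. Decompose $P$ orthogonally into the eigenspaces of $T$: write $P = P_0 \oplus P_1 \oplus P_{\mathrm{mid}}$ where $P_0 = \ker T$, $P_1$ is the eigenspace for eigenvalue $1$, and $P_{\mathrm{mid}}$ is the sum of eigenspaces for eigenvalues in $(0,1)$. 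Incompatibility says $P_{\mathrm{mid}} \neq \{0\}$. I expect the key computation to be that the subspace $Q \sas P = P \cap (Q \vee P^\bot)$ — or an iterate of the Sasaki projection starting from $Q$ and landing in $P$ — is precisely $P_1 \oplus P_{\mathrm{mid}}$ (the part of $P$ on which $T$ is invertible, equivalently the image of $\Pi_P \Pi_Q$ restricted to $P$), which is a \emph{proper} subspace of $P$ whenever $P_0 \neq \{0\}$; and if instead $P_0 = \{0\}$, one iterates: $(Q\sas P)\sas P$, etc., and one shows that iterating the Sasaki projection $Q, P, Q, P, \dots$ eventually strictly shrinks $P$ because the "middle" eigenvalues get squeezed toward $0$. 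Concretely, iterating two Sasaki projections corresponds to replacing $T$ by an operator whose eigenvalues are strictly smaller on $P_{\mathrm{mid}}$ and unchanged ($0$ or $1$) elsewhere, so after finitely many steps the eigenvalue-$0$ part becomes nontrivial (using finite dimension and the fact that $P \neq Q$ forces $P_1 \subsetneq P$, so there is genuinely a middle or a zero part to exploit).

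The main obstacle I anticipate is making the last sentence precise: showing that the iterated Sasaki projections between two incompatible subspaces, all of which remain in $\bracc a_{\mf G}$ and all $\leq P$, cannot \emph{stabilize at $P$ itself}. If some iterate equalled $P$ exactly, no contradiction with minimality would arise. Ruling this out is exactly the statement that the self-adjoint contraction $T$ on the finite-dimensional space $P$ cannot have $1$ as its only spectral value once we have passed to $Q \sas P$ — but that is true precisely because $\mathrm{sp}(T) \subseteq \{1\}$ would mean $P \leq Q$ (compatibility, indeed $P \leq Q$), contradicting that $P$ and $Q$ are incompatible and distinct. So the argument should close: incompatibility guarantees a spectral value of $T$ strictly below $1$, hence $Q \sas P$ (or a finite iterate) is a proper subspace of $P$ lying in $\bracc a_{\mf G}$, hence $P$ is not minimal; by symmetry neither is $Q$. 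I would double-check the degenerate edge cases (one of $P, Q$ comparable to the other, or $P \wedge Q = \bot$) separately, as there the relevant iterate is immediately seen to drop below $P$ in one step.
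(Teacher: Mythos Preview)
Your approach has a genuine gap, and in fact cannot be repaired along the lines you sketch.

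The core of your plan is to stay entirely inside the Sasaki filter $\bracc a_{\mf G}$: starting from $P,Q\in\bracc a_{\mf G}$, use $\sas$-stability to produce iterates $Q\sas P$, $P\sas Q$, $(P\sas Q)\sas P$, \dots\ and argue that one of these is strictly below $P$. But this is false. In $L(\mc H)$ the Sasaki projection is the \emph{subspace} $Q\sas P=\Pi_P(Q)$, not the operator $\Pi_P\Pi_Q$; iterating it does not ``squeeze eigenvalues toward~$0$''. Concretely, in $\mc H=\mb C^3$ take $P=\Vect(e_1,e_2)$ and $Q=\Vect(e_1,e_2+e_3)$. These are incompatible (the operator $T=\Pi_P\Pi_Q|_P$ has spectrum $\{1,\tfrac12\}$), yet $Q\sas P=P$ and $P\sas Q=Q$, so the alternating Sasaki projections simply oscillate between $P$ and $Q$ forever and never drop strictly below either. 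Your closing sentence (``incompatibility guarantees a spectral value of $T$ strictly below~$1$, hence $Q\sas P$ is a proper subspace of~$P$'') conflates ``$T$ has an eigenvalue~$<1$'' with ``$T$ has kernel''; only the latter makes $Q\sas P=\Pi_P(Q)$ a proper subspace of~$P$.

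More decisively, your argument uses only that $\bracc a_{\mf G}$ is a consistent Sasaki filter, and that is not enough: with $P,Q$ as above, the set $\{P,Q,\top\}$ is already a consistent Sasaki filter of $L(\mb C^3)$ (one checks $P\sas Q=Q$, $Q\sas P=P$, and it is upward closed since $P,Q$ are hyperplanes), and $P,Q$ are both minimal in it. So no argument that only invokes axioms~\eqref{eq:verifies_top}--\eqref{eq:noncontextuality_2} can prove the proposition.

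What the paper does instead is bring in the measurement relation $M$ and Theorem~\ref{theo:Kochen-Specker-Brunet}. From an eigenvector $u\in P$ of $T$ with eigenvalue $\lambda\in(0,1)$ and $v=\Pi_Q u$, one forms the two-dimensional subspace $C=\Vect(u,v)$; a short computation shows $C$ is compatible with both $P$ and $Q$, with $P\sas C=\Vect(u)$ and $Q\sas C=\Vect(v)$ two \emph{distinct atoms}. If there were $b\in A$ with $M(a,C,b)$, axiom~\eqref{eq:compatible_preservation} would force $\bracc b_{\mf G}$ to contain both atoms, contradicting Theorem~\ref{theo:Kochen-Specker-Brunet}. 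Hence no such $b$ exists, which by the definition of~$\verifies$ means $C^\bot\in\bracc a_{\mf G}$. Now $C^\bot\sas P=P\wedge\Vect(u)^\bot<P$ lies in $\bracc a_{\mf G}$, so $P$ is not minimal. The new element $C^\bot$ that makes the argument work does \emph{not} come from Sasaki-combining $P$ and $Q$; it is forced into $\bracc a_{\mf G}$ by the interaction of the measurement axioms with the Kochen--Specker-type obstruction.
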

\begin{proof}
From the previous discussion, if $P$ and $Q$ are incompatible, then there exists an eigenvector $u$ of $\Pi_P \circ \Pi_Q|_P$ associated with an eigenvalue $\lambda \not \in \{0,1\}$. As a consequence, by defining $v = \Pi_Q(u)$, we have $u \not \in Q$ and $v \not \in P$. One can note moreover that $\Pi_P(v) = \lambda u$. Let us now define $C = \Vect(u,v)$. Having $\lambda \neq 0$, one can write
$$ C = \Vect(\lambda u, v - \lambda u) = \Vect\bigl(\Pi_P(v),v - \Pi_P(v)\bigr)$$
so that $C$ is compatible with $P$, having $\Pi_P(v) \in P$ and $v - \Pi_P(v) \in P^\bot$. This implies that $P \sas C = P \wedge C = \Vect\bigl(\Pi_P(v)\bigr) = \Vect(u)$.
Similarly, one can write $C = \Vect(v, u - v) = \Vect\bigl(\Pi_Q(u), u - \Pi_Q(u)\bigr)$ so that $C$ is compatible with $Q$ and thus $Q \sas C = \Vect(v)$.

As a consequence, $\bracc a_{\mf G} \sastar C$ contains two distinct vector rays -- namely $\Vect(u)$ and $\Vect(v)$ -- so that, following theorem \ref{theo:Kochen-Specker-Brunet}, there is no $b \in A$ such that $\bracc a_{\mf G} \sastar C \subseteq \bracc b_{\mf G}$. As a consequence, there is no $b \in A$ verifying $M(a,C,b)$, implying that $C^\bot \in \bracc a_{\mf G}$. Finally, since $\bracc a_{\mf G}$ is a Sasaki filter, it also contains
$ C^\bot \sas P$ so that $P$ is not minimal in $\bracc a_{\mf G}$:
$$ C^\bot \sas P = P \wedge (C^\bot \vee P^\bot) = P \wedge (C \wedge P)^\bot = P \wedge \bigl(\Vect(u)\bigr)^\bot < P $$
Similarly, $\bracc a_{\mf G}$ also contains $ C^\bot \sas Q $ which is strictly smaller than $Q$.
\end{proof}

\begin{corollary} \label{prop:at_least}
If $\mc H$ is a Hilbert space such that $3 \leq \dim \mc H < \infty$ and $\mf G = (A, M)$ is a model of $\mc T_{L(\mc H)}$, then for all $a \in A$, $\bracc a_{\mf G}$ contains at most one minimal element.
\end{corollary}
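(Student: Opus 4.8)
The plan is to derive Corollary~\ref{prop:at_least} as an almost immediate consequence of Proposition~\ref{prop:not_not_principal}. The key observation is that any two distinct minimal elements of a Sasaki filter must be incompatible: indeed, if $P$ and $Q$ were both minimal in $\bracc a_{\mf G}$ and compatible, then by axiom~\eqref{eq:noncontextuality_2} (or rather by $\sas$-stability of the Sasaki filter $\bracc a_{\mf G}$) the element $P \sas Q = P \wedge Q$ would also belong to $\bracc a_{\mf G}$; but $P \wedge Q \leq P$ and $P \wedge Q \leq Q$, and minimality of both would force $P \wedge Q = P = Q$, contradicting distinctness.

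First I would make this compatibility remark precise: suppose for contradiction that $\bracc a_{\mf G}$ contains two distinct minimal elements $P$ and $Q$. By the preceding paragraph they cannot be compatible, so $P$ and $Q$ are two distinct incompatible elements of $\bracc a_{\mf G}$. Then I would invoke Proposition~\ref{prop:not_not_principal} directly: it tells us that in this situation neither $P$ nor $Q$ is minimal in $\bracc a_{\mf G}$, contradicting our assumption. Hence $\bracc a_{\mf G}$ cannot contain two distinct minimal elements, which is exactly the claim.

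One small point worth spelling out is why ``minimal'' is the right notion and the argument does not secretly require a least element: the proof of Proposition~\ref{prop:not_not_principal} exhibits, for incompatible $P$, an element $C^\bot \sas P \in \bracc a_{\mf G}$ with $C^\bot \sas P < P$, which directly witnesses non-minimality in the order-theoretic sense, so no completeness or lattice-finiteness assumption beyond what is already in the hypotheses is needed here. The finite-dimensionality of $\mc H$ enters only through Proposition~\ref{prop:not_not_principal} (via the spectral decomposition of $\Pi_P \circ \Pi_Q|_P$), not through the present deduction.

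I do not anticipate a genuine obstacle: the entire content has been front-loaded into Proposition~\ref{prop:not_not_principal} and Theorem~\ref{theo:Kochen-Specker-Brunet}. The only thing to be careful about is the logical packaging — stating the contrapositive cleanly (two distinct minimal elements $\Rightarrow$ incompatible $\Rightarrow$ contradiction with Proposition~\ref{prop:not_not_principal}) — and making sure the ``compatible minimal elements coincide'' step cites $\sas$-stability of $\bracc a_{\mf G}$ correctly rather than re-deriving it.
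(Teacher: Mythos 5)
Your proposal is correct and follows the route the paper intends: the corollary is stated as an immediate consequence of Proposition~\ref{prop:not_not_principal}, and your handling of the compatible case (two distinct compatible minimal elements would force $P \sas Q = P \wedge Q \in \bracc a_{\mf G}$ and hence $P = Q$ by minimality) supplies exactly the small step the paper leaves implicit. Nothing is missing.
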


Now, since $\mc H$ is finite dimensional, $L(\mc H)$ has a finite height so that any non-empty subset of $L(\mc H)$ contains at least one minimal element. Combining this remark with the previous proposition, we obtain:

\begin{theorem} \label{corr:e_a}
Given a model $\mathfrak G = (A, M)$ of $\mc T_{L(\mc H)} $ where $3 \leq \dim \mc H < \infty$, for all $a \in A$, there exists an element $e(a) \in L(\mc H)$ such that $\bracc a_{\mathfrak G} = e(a)^\uparrow = \bigl\{ p \in L(\mc H) \bigm| e(a) \leq p \bigr\}$.
\end{theorem}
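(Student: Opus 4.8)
The plan is to show that $\bracc a_{\mf G}$ is a \emph{principal} upper set, i.e.\ that it has a least element, which is exactly the content of the statement once we set $e(a) = \bigwedge \bracc a_{\mf G}$. The two ingredients are already in place: Corollary~\ref{prop:at_least} tells us that $\bracc a_{\mf G}$ has at most one minimal element, and the finite-dimensionality of $\mc H$ guarantees that $L(\mc H)$ has finite height, so that the non-empty set $\bracc a_{\mf G}$ (non-empty by axiom~\eqref{eq:verifies_top}, since $\top \in \bracc a_{\mf G}$) has at least one minimal element. Combining these, $\bracc a_{\mf G}$ has exactly one minimal element; call it $e(a)$.

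First I would argue that this unique minimal element $e(a)$ is in fact the least element of $\bracc a_{\mf G}$. Take any $P \in \bracc a_{\mf G}$. Since $L(\mc H)$ has finite height, the set $\{ R \in \bracc a_{\mf G} \mid R \leq P\}$ is non-empty (it contains $P$) and has a minimal element $R_0$; but a minimal element of this down-set is also minimal in $\bracc a_{\mf G}$ itself (anything strictly below $R_0$ and in $\bracc a_{\mf G}$ would be $\leq P$ too, contradicting minimality of $R_0$ in the down-set). By uniqueness of the minimal element, $R_0 = e(a)$, hence $e(a) \leq P$. Since $P$ was arbitrary, $e(a)$ is the least element of $\bracc a_{\mf G}$.

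It then follows immediately that $\bracc a_{\mf G} \subseteq e(a)^\uparrow$. For the reverse inclusion, suppose $e(a) \leq p$; since $e(a) \in \bracc a_{\mf G}$, which is an upper set (property~2 of Sasaki filters, coming from axiom~\eqref{eq:noncontextuality_1}), we get $p \in \bracc a_{\mf G}$. Hence $\bracc a_{\mf G} = e(a)^\uparrow$, as claimed. The main obstacle — really the only non-bookkeeping point — is the passage from ``unique minimal element'' to ``least element'', which genuinely uses the finite-height hypothesis; in an infinite-dimensional lattice a Sasaki filter could conceivably have a single minimal element without that element being a lower bound for the whole filter, so the descending-chain argument above (or an equivalent well-foundedness argument) is where the finiteness assumption does its work.
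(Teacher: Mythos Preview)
Your proof is correct and follows exactly the route the paper takes: combine Corollary~\ref{prop:at_least} (at most one minimal element) with the finite-height of $L(\mc H)$ (at least one minimal element), and conclude that $\bracc a_{\mf G}$ is principal. The paper compresses all of this into a single sentence and leaves implicit the passage from ``unique minimal'' to ``least'', whereas you spell out that well-foundedness argument explicitly; that extra care is welcome, but there is no difference in substance.
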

With this notation, it can be remarked that an outcome $p$ is \emph{certain} w.r.t. $a \in A$ if $e(a) \leq p$ and \emph{possible} if $p \not \leq e(a)^\bot$, and that if $M(a, p, b)$, then $e(b) \leq e(a) \sas p$.

An important consequence of this result is that, as discussed earlier, if $a \verifiesm G p$ and $a \verifiesm G q$, then necessarily $a \verifiesm G p \wedge q$ since $\bracc a_{\mf G}$ is such that $\bracc a_{\mf G} \leq p$ and $\bracc a_{\mf G} \leq q$ and hence $\bracc a_{\mf G} \leq p \wedge q$. Expressed another way, the previously discussed equation~\eqref{eq:wedge} is a theorem of $\mc T_{L(\mc H)}$ if $3 \leq \dim \mc H < \infty$.

Another consequence is the following: in \cite{RandallFoulis:Disaster}, Randall and Foulis called a ``metaphysical disaster'' the fact that in quantum logic, the physical properties of a quantum system were usually put in one-to-one correspondance with the set of experimental propositions, i.e.\ the closed subspaces of a Hilbert space. In our approach, this correspondance directly follows from theorem~\ref{corr:e_a}, since the set $\bracc a_{\mf G}$ of properties associated to an element $a$ of a model $\mf G$ can actually be represented by an element of the lattice $L(\mc H)$, namely by~$e(a)$.



\section{Sequences of Measurement Outcomes} \label{sec:sequences}

We end this article by investigating the following question: given an orthomodular lattice $L$ and elements $p_1, \ldots, p_n \in L$, is it possible to obtain these elements as a sequence of successive outcomes when measuring a quantum system? In terms of $\mc T_L$, the question becomes whether
%
$$ \mc T_L \models \fexist{s_0, \mc O_1, s_1, \ldots, \mc O_n, s_n \in S} \Mes(s_0, \mc O_1, p_1, s_1) \cand \ \cdots \ \cand \Mes(s_{n-1},\mc O_n, p_n,s_n) $$
and we will investigate the verification of such a sentence by the models of $\mc T_L$. This suggests the following definition.
\begin{definition}
Given a complete orthomodular lattice $L$ and a model $\mf G = (A, M)$ of $\mc T_L$, we define the language $\ell(\mf G)$ of $\mf G$ as the set of labels of paths of $\mf G$. Formally, a word $\mb p = p_1 p_2 \cdots p_n$ on $L$ (i.e.\ a finite sequence of elements of $L$) is in $\ell(\mf G)$ if and only if there exists elements $a_0, a_1, \ldots, a_n \in A$ such~that $M(a_{k-1}, p_k, a_k)$ for all $k \in \bracc{1,n}$ (i.e. $\mb p$ is the label of path $(a_0, a_1, \ldots, a_n)$).
\end{definition}
In that case, given a model $\mf G$ and a word $\mb p = p_1 \cdots p_n$, the previous question amounts to determining whether $\mb p \in \ell(\mf G)$. The answer does \emph{a priori} depend on the model, which is extremely interesting as it could provide a method for discriminating models. Indeed, suppose that a word $\mb p = p_1 p_2 \cdots p_n$ can occur as a sequence of actual outcomes of a physical experiment, and that it does not belong to the language $\ell(\mf G)$ of some model $\mf G$. This would constitute a criteria for ruling out $\mf G$ as a correct model of quantum mechanics.
However, as we will see next, in the case of a Hilbert lattice $L(\mc H)$ with $3 \leq \dim \mc H < \infty$, all the models of $\mc T_{L(\mc H)}$ do actually define the same language. 

\begin{proposition}
For any model $\mathfrak G = (A, M)$ of $\mc T_{L(\mc H)}$ with $3 \leq \dim \mc H < \infty$, one~has
$ \ell(\mathfrak G) \subseteq \ell(\mathfrak L_{L(\mc H)}) $.
\end{proposition}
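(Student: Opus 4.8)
The plan is to show that any word accepted by an arbitrary model $\mathfrak G$ is also accepted by the lattice graph $\mathfrak L_{L(\mc H)}$, by transporting a path of $\mathfrak G$ to a path of $\mathfrak L_{L(\mc H)}$ via the map $e$ furnished by Theorem~\ref{corr:e_a}. So suppose $\mb p = p_1 \cdots p_n \in \ell(\mathfrak G)$, witnessed by vertices $a_0, a_1, \ldots, a_n \in A$ with $M(a_{k-1}, p_k, a_k)$ for all $k$. First I would invoke Theorem~\ref{corr:e_a} to obtain, for each $k$, an element $e(a_k) \in L(\mc H)$ with $\bracc{a_k}_{\mathfrak G} = e(a_k)^\uparrow$. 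The key structural fact, already noted in the excerpt just after Theorem~\ref{corr:e_a}, is that $M(a_{k-1}, p_k, a_k)$ forces $e(a_k) \leq e(a_{k-1}) \sas p_k$; this comes from axiom~\eqref{eq:compatible_preservation}, since for every $P \in \bracc{a_{k-1}}_{\mathfrak G}$ we get $P \sas p_k \in \bracc{a_k}_{\mathfrak G}$, and taking $P = e(a_{k-1})$ (which lies in its own up-set) gives $e(a_{k-1}) \sas p_k \in \bracc{a_k}_{\mathfrak G} = e(a_k)^\uparrow$.

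Next I would define the candidate path in $\mathfrak L_{L(\mc H)}$. Recall $\mathfrak L_{L(\mc H)}$ has universe $L(\mc H)^\star$ and $M_{\mf L}(a, P, b) \iff b \leq a \sas P$. The natural choice is $b_k = e(a_k)$, but one must check these are valid vertices, i.e.\ that $e(a_k) \neq \bot$: this holds because $\bracc{a_k}_{\mathfrak G}$ is a \emph{consistent} Sasaki filter (axiom~\eqref{eq:not_verifies_bot} gives $\neg(a_k \verifiesm G \bot)$, so $\bot \notin \bracc{a_k}_{\mathfrak G} = e(a_k)^\uparrow$, forcing $e(a_k) \neq \bot$). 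Then the relation $e(a_k) \leq e(a_{k-1}) \sas p_k$ established above is exactly the statement $M_{\mf L}\bigl(e(a_{k-1}), p_k, e(a_k)\bigr)$, so $\bigl(e(a_0), e(a_1), \ldots, e(a_n)\bigr)$ is a path in $\mathfrak L_{L(\mc H)}$ with label $\mb p$, whence $\mb p \in \ell(\mathfrak L_{L(\mc H)})$.

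I expect the argument to be essentially this short, because Theorem~\ref{corr:e_a} does all the heavy lifting. The only point requiring a little care is the very first step where $e(a_0)$ is used as an input vertex: one needs $e(a_0) \neq \bot$ to even start a path in $\mathfrak L_{L(\mc H)}$, and more importantly one should make sure the initial segment is legitimate — here there is no incoming edge constraint on $a_0$, so nothing is needed beyond $e(a_0) \in L(\mc H)^\star$. If the authors additionally want the converse inclusion (to conclude all models share the language $\ell(\mathfrak L_{L(\mc H)})$), the hard part would be the reverse direction — realizing an arbitrary path of $\mathfrak L_{L(\mc H)}$ inside $\mathfrak G$ — but that is not what is asked in this proposition, and in fact I would expect it to follow instead from $\mathfrak L_{L(\mc H)}$ itself being a model together with the fact that $\mathfrak H_{\mc H} \models \mc T_{L(\mc H)}$ realizes the same words; for the stated inclusion, the map-$e$ transport above suffices.
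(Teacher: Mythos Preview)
Your proposal is correct and follows essentially the same route as the paper: transport a path $(a_0,\ldots,a_n)$ in $\mathfrak G$ to $(e(a_0),\ldots,e(a_n))$ in $\mathfrak L_{L(\mc H)}$ using Theorem~\ref{corr:e_a}, observing that $M(a_{k-1},p_k,a_k)$ yields $e(a_k)\leq e(a_{k-1})\sas p_k$, i.e.\ $M_{\mf L}\bigl(e(a_{k-1}),p_k,e(a_k)\bigr)$. You are in fact slightly more careful than the paper in explicitly checking $e(a_k)\neq\bot$ so that the vertices lie in $L(\mc H)^\star$.
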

\begin{proof}
With the previous notation, if $M(a, p, b)$ then $e(b) \leq e(a) \sas p$. As a consequence, given $\mb p = p_1 p_2 \cdots p_n \in \ell(\mathfrak G)$, let $(a_0, \ldots, a_n)$ be a path in $\mf G$ labelled by $\mb p$, so that $\fall {k \in \bracc{1,n}} M(a_{k-1}, p_k, a_k)$. Then, for all $k \in \bracc{1,n}$, we have $ e(a_{k}) \leq e(a_{k - 1}) \sas p_k $ i.e.\ $M_{\mathfrak L}\bigl(e(a_{k-1}), p_k, e(a_k)\bigr)$. As a consequence, $\mb p$ is the label in $\mathfrak L_{\mc H}$ of the path $\bigl(e(a_0), \ldots, e(a_n)\bigr)$, so that~$\mb p \in \ell\bigl(\mathfrak L_{\mc H})$.
\end{proof}

Let us now recall that elements of $L(\mc H)$ are subspaces of $\mc H$, i.e.\ they are sets of vectors. In particular, for all $a, p \in L(\mc H)$, one has $ a \sas p = \bigl\{ \Pi_p |\psi\ket \bigm| |\psi\ket \in a \bigr\}$. 
As a consequence, if $b \leq a \sas p$, then for all $|\varphi\ket \in b$, there exists a $|\psi\ket \in a$ such that $ \mathopen| \varphi \ket = \Pi_p | \psi \ket $.

\begin{proposition}
One~has $ \ell(\mathfrak L_{L(\mc H)}) \subseteq \ell(\mathfrak H_{\mc H}) $.
\end{proposition}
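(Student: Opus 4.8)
The plan is to take a word $\mb p = p_1 \cdots p_n \in \ell(\mathfrak L_{L(\mc H)})$, unfold the witnessing path in the lattice graph, and reconstruct from it a path in the Hilbert graph carrying the same label --- building this path from its \emph{last} vertex backwards rather than from the first vertex forwards.

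First I would unpack the hypothesis: $\mb p \in \ell(\mathfrak L_{L(\mc H)})$ means there are elements $a_0, a_1, \ldots, a_n$ of $L(\mc H)^\star$ (so each $a_k \neq \bot$) with $M_{\mathfrak L}(a_{k-1}, p_k, a_k)$, i.e.\ $a_k \leq a_{k-1} \sas p_k$, for every $k \in \bracc{1,n}$. Using the identity recalled just before the statement, namely $a \sas p = \bigl\{ \Pi_p |\psi\ket \bigm| |\psi\ket \in a \bigr\}$, the inequality $a_k \leq a_{k-1} \sas p_k$ says precisely that every vector of $a_k$ is of the form $\Pi_{p_k} |\psi\ket$ for some $|\psi\ket \in a_{k-1}$.

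Next I would build unit vectors $|\varphi_0\ket, \ldots, |\varphi_n\ket$ by downward induction on $k$. For the base step, since $a_n \neq \bot$ it contains a unit vector, so I pick any $|\varphi_n\ket \in a_n$ with $\bigl\| |\varphi_n\ket \bigr\| = 1$. For the inductive step, assuming $|\varphi_k\ket$ is a unit vector of $a_k$, the observation above yields $|\psi\ket \in a_{k-1}$ with $\Pi_{p_k} |\psi\ket = |\varphi_k\ket$; since $|\varphi_k\ket \neq \mathopen|0\ket$ the vector $|\psi\ket$ is nonzero, and I set $|\varphi_{k-1}\ket = |\psi\ket \big/ \bigl\| |\psi\ket \bigr\|$, a unit vector of $a_{k-1}$. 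A one-line computation then gives $\Pi_{p_k} |\varphi_{k-1}\ket = |\varphi_k\ket \big/ \bigl\| |\psi\ket \bigr\| \neq \mathopen|0\ket$ and $\Pi_{p_k}|\varphi_{k-1}\ket \big/ \bigl\| \Pi_{p_k}|\varphi_{k-1}\ket \bigr\| = |\varphi_k\ket$, that is, $M_{\mf H}\bigl(|\varphi_{k-1}\ket, p_k, |\varphi_k\ket\bigr)$. Hence $(|\varphi_0\ket, \ldots, |\varphi_n\ket)$ is a path in $\mathfrak H_{\mc H}$ with label $\mb p$, so $\mb p \in \ell(\mathfrak H_{\mc H})$; the degenerate case $n = 0$ is trivial since both graphs are nonempty.

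I do not expect a serious obstacle here: the single point that requires care --- and the reason a naive \emph{forward} induction would fail --- is that starting from an arbitrary unit vector of $a_0$ the projection $\Pi_{p_1}|\varphi_0\ket$ might vanish, breaking the $M_{\mf H}$ relation. Running the induction backwards, so that at each stage one selects a vector already known to lie in $a_{k-1} \sas p_k$ and hence to be a genuine $\Pi_{p_k}$-image of something in $a_{k-1}$, removes this difficulty entirely.
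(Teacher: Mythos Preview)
Your proposal is correct and follows essentially the same route as the paper: both start from a path $(e_0,\ldots,e_n)$ in $\mathfrak L_{L(\mc H)}$, pick a unit vector in $e_n$, and use the identity $a \sas p = \{\Pi_p|\psi\ket \mid |\psi\ket \in a\}$ to construct preimages backwards so that the normalized projections line up. Your write-up is in fact more explicit than the paper's about the normalization step and about why a forward construction would fail, but the underlying argument is the same.
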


\begin{proof}
Let $\mb p = p_1 p_2 \cdots p_n \in \ell(\mathfrak L_{\mc H})$ and let $(e_0, e_1, \cdots e_n)$ be a path labelled by $\mb p$. Moreover, let $|\varphi_n\ket$ be a normalized element of $e_n$ and define backwards $|\varphi_{n - 1}\ket \in e_{n-1}$, \ldots, $|\varphi_1\ket \in e_1$ and  $|\varphi_0\ket \in e_0$ such that
$$ \fall {k \in \bracc{1,n}} \mathopen|\varphi_k \ket = \frac {\Pi_{p_k} | \varphi_{k-1} \ket} {\bigl\| \Pi_{p_k} | \varphi_{k-1} \ket \bigr\|} $$
In that case, for all $k$, $M_{\mathfrak H}\bigl(|\varphi_{k-1}\ket, p_k, |\varphi_k\ket\bigr)$, so that $\mb p$ labels the path $\bigl(|\varphi_0\ket, |\varphi_1\ket, \cdots, |\varphi_n\ket\bigr)$ in~$\mathfrak H_{\mc H}$.
\end{proof}

\begin{proposition}
For any model $\mathfrak G = (A, M)$ of $\mc T_{L(\mc H)}$ with $3 \leq \dim \mc H < \infty$, one~has
$ \ell(\mf H_{\mc H}) \subseteq \ell(\mathfrak G) $.
\end{proposition}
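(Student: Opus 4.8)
The plan is to prove the reverse inclusion $\ell(\mf H_{\mc H}) \subseteq \ell(\mf G)$, which together with the two preceding propositions and the earlier inclusion $\ell(\mf G) \subseteq \ell(\mf L_{L(\mc H)})$ will pin down every model's language as exactly $\ell(\mf H_{\mc H})$. So I would fix a word $\mb p = p_1 \cdots p_n \in \ell(\mf H_{\mc H})$, witnessed by a path $(|\varphi_0\ket, |\varphi_1\ket, \ldots, |\varphi_n\ket)$ of unit vectors in $\mf H_{\mc H}$ with $M_{\mf H}(|\varphi_{k-1}\ket, p_k, |\varphi_k\ket)$, and I want to produce a path $(a_0, a_1, \ldots, a_n)$ in the \emph{arbitrary} model $\mf G$ with the same label. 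The natural idea is to realize each $|\varphi_k\ket$ inside $\mf G$: I would like to find $a_k \in A$ with $e(a_k) \leq \Vect(|\varphi_k\ket)$, so that $e(a_k)$ is either the ray $\Vect(|\varphi_k\ket)$ itself or (conceivably) $\bot$ — but $\bot \notin \bracc{a_k}_{\mf G}$ by consistency, so in fact $e(a_k) = \Vect(|\varphi_k\ket)$.

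The key technical step is the existence of such a vertex: for \emph{every} ray $R = \Vect(|\psi\ket)$ in $L(\mc H)$, there exists $a \in A$ with $e(a) = R$. This is where I expect the real work to be. The axiom schema \eqref{eq:is_outcome} only gives, for each $p \neq \top$, \emph{some} $s$ with $\neg(s \verifies p)$, which is far from enough. Instead I would argue by contradiction using Corollary~\ref{prop:at_least} / Theorem~\ref{corr:e_a}: suppose no $a \in A$ has $e(a) = R$. Every $a$ has $\bracc{a}_{\mf G} = e(a)^\uparrow$, a principal filter. I would try to start from \emph{any} vertex $a_0$ (the model is non-empty — this follows since $\mc T_L$ has a model, or one can invoke measurability, which forces $A \neq \emptyset$), and repeatedly apply axiom \eqref{eq:compatible_preservation}: given $a$ and any $p$, as long as $\bracc{a}_{\mf G} \sastar p$ is contained in some $\bracc{b}_{\mf G}$ there is an arrow $M(a,p,b)$ — and by measurability such a $b$ exists whenever $e(a) \sas p \neq \bot$, i.e.\ whenever $p$ is possible for $a$. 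Choosing $p$ cleverly (for instance a rank-one projector, or a well-chosen subspace so that $e(a) \sas p$ drops in dimension), I can drive $e(a)$ down to an atom in finitely many steps, since $L(\mc H)$ has finite height. This produces a vertex whose $e$-value is \emph{some} ray; the remaining point is that I can steer this ray to be exactly $R$.

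To control \emph{which} ray I land on, I would use the following: if $e(a)$ is an atom $R'$ and $R$ is any other ray, pick a two-dimensional subspace $C$ containing both $R'$ and $R$ that is ``generic'' with respect to the projection $\Pi_C|_{R'}$; measuring $R'$ with $C$ is certain to succeed (since $R' \leq C$ so $R' \sas C = R'\neq\bot$), and then one further measurement inside $C$ — now working in the $2$-dimensional space $C$ where every model is easy to analyse, or simply using the lattice-graph path $\Vect(|\psi\ket) \to \Vect(\Pi_{p}|\psi\ket)$ — lets me move to any chosen ray of $C$, in particular $R$. Concretely, I expect the cleanest route is: show $e(a_0)$ for some starting $a_0$ can be reduced to a ray, then for each $k$ from $1$ to $n$, given a vertex $a_{k-1}$ with $e(a_{k-1})$ an atom, show there is an arrow labelled $p_k$ to a vertex $a_k$ with $e(a_k) = \Vect(\Pi_{p_k}|\varphi_{k-1}\ket)$; arranging the starting ray to be $\Vect(|\varphi_0\ket)$ then makes $e(a_k) = \Vect(|\varphi_k\ket)$ by induction, exactly matching the $\mf H_{\mc H}$-path. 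The arrow labelled $p_k$ exists because $\bracc{a_{k-1}}_{\mf G}\sastar p_k = (\Vect(|\varphi_{k-1}\ket))^\uparrow \sastar p_k$ has least element $\Vect(|\varphi_{k-1}\ket)\sas p_k = \Vect(\Pi_{p_k}|\varphi_{k-1}\ket)$, which is a nonzero ray, and by the reduction argument there is a vertex realizing precisely that ray; axiom \eqref{eq:compatible_preservation} together with measurability then supplies the edge.

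The main obstacle, then, is purely the ray-realization lemma — that every atom of $L(\mc H)$ equals $e(a)$ for some $a \in A$ — and more precisely the bookkeeping needed to show one can always \emph{descend} to, and \emph{move between}, atoms using only axioms \eqref{eq:compatible_preservation}, measurability, and the finite-height structure, without ever getting stuck at a vertex whose filter $e(a)^\uparrow$ blocks the next Sasaki-projected inclusion. Everything downstream of that lemma is the same backward construction already used in the proof that $\ell(\mf L_{L(\mc H)}) \subseteq \ell(\mf H_{\mc H})$, just transported into the abstract model $\mf G$ via the representation $e(\blank)$ of Theorem~\ref{corr:e_a}.
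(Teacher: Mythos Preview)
Your overall architecture is right and matches the paper's: establish a ray-realization lemma (every atom $R$ of $L(\mc H)$ equals $e(a)$ for some $a\in A$), then run the obvious induction showing that from $a_{k-1}$ with $e(a_{k-1})=\Vect(|\varphi_{k-1}\ket)$ an edge labelled $p_k$ exists and lands at some $a_k$ with $e(a_k)=\Vect(|\varphi_k\ket)$. The inductive step you describe is exactly the paper's.

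Where you diverge is the ray-realization lemma, and here you have misjudged axiom~\eqref{eq:is_outcome}: it is not ``far from enough'' --- it is exactly enough. Unfold the definition of $\verifies$: the statement $\neg(s\verifies p)$ is literally $\fexist{t} M(s,p^\bot,t)$. So apply \eqref{eq:is_outcome} with $p=R^\bot\neq\top$ to obtain $a_{-1}\in A$ and then $a_0\in A$ with $M(a_{-1},R,a_0)$. Theorem~\ref{corr:e_a} and axiom~\eqref{eq:compatible_preservation} give $\bot\neq e(a_0)\leq e(a_{-1})\sas R\leq R$, and since $R$ is an atom this forces $e(a_0)=R$. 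That is the paper's entire argument for the lemma: one line, no descent, no steering.

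Your proposed route (start anywhere, descend to \emph{some} atom, then steer to the desired one) can be made to work, but it is considerably heavier and your sketch has loose ends: measuring with the $2$-plane $C\supseteq R'\cup R$ leaves you at $R'$ (since $R'\sas C=R'$), and the subsequent jump to $R$ fails when $R'\perp R$, so you would need an intermediate non-orthogonal ray. None of this is needed. Also, a small attribution issue: the existence of an edge $M(a,p,b)$ when $e(a)\not\leq p^\bot$ is not ``measurability'' but simply the definition of $\verifies$ (namely $\neg(a\verifies p^\bot)$ unfolds to the existence of such an edge).
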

\begin{proof}
Let $\mb p = p_1 p_2 \cdots p_n$ be in $\ell(\mf H_H)$, and let $\bigl(|\varphi_0\ket, |\varphi_1\ket, \ldots, |\varphi_n\ket\bigr)$ be a path of $\mf H_{\mc H}$ labelled by $\mb p$. Since $\mf G$ is a model of $\mc T_{L(\mc H)}$ and $\Vect(|\varphi_0\ket) \neq \bot$, there exists $a_{-1}$ and $a_0$ in $A$ such that $M\bigl(a_{-1}, \Vect(|\varphi_0\ket), a_0\bigr)$. With the previous notations,
$$ \bot < e(a_0) \leq e(a_{-1}) \sas \Vect(|\varphi_0\ket) $$
But $\Vect(|\varphi_0\ket)$ is an atom of $L(\mc H)$ and $e(a_{-1}) \sas \Vect(|\varphi_0\ket) \leq \Vect(|\varphi_0\ket)$, so that
$ e(a_0) = \Vect(|\varphi_0\ket) $. Now, having $\Pi_{p_1} |\varphi_0\ket \neq \mathopen| 0 \ket $ or, equivalently, $p_1 \not \leq e(a_0)^\bot$, there exists $a_1 \in A$ such that $M(a_0, p_1, a_1)$. It~verifies $ e(a_1) \leq e(a_0) \sas p_1 $. But again, $e(a_0)$ is an atom, so that $e(a_0) \sas p_1$ is also an atom and~hence
$$ e(a_1) = e(a_0) \sas p_1 = \Vect(|\varphi_0\ket) \sas p_1 = \Vect\bigl(\Pi_{p_1} |\varphi_0\ket \bigr) = \Vect\bigl(|\varphi_1\ket\bigr) $$
Iterating this process, it is possible to define elements $a_1, \ldots, a_n$ such that each time, $M(a_{k-1}, p_k, a_k)$ and $e(a_k) = \Vect\bigl(|\varphi_k\ket\bigr)$. As a consequence, the path $(a_0, \ldots, a_n)$ is labelled by $\mb p$, so that $\mb p \in \ell(\mf G)$.
\end{proof}

To summarize these results, we have~shown:
\begin{theorem} \label{theo:one_language}
If \/ $3 \leq \dim \mc H < \infty$, then every model $\mf G$ of $\mc T_{L(\mc H)}$ verifies $ \ell(\mf G) = \ell\bigl(\mf L_{L(\mc H)}\bigr) = \ell(\mf H_H) $.
\end{theorem}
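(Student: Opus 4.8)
The plan is simply to chain the three inclusions already established by the three preceding propositions. The statement asserts that for every model $\mf G$ of $\mc T_{L(\mc H)}$ with $3 \leq \dim \mc H < \infty$, the three languages $\ell(\mf G)$, $\ell(\mf L_{L(\mc H)})$ and $\ell(\mf H_{\mc H})$ coincide. Since $\mf H_{\mc H}$ and $\mf L_{L(\mc H)}$ are themselves models of $\mc T_{L(\mc H)}$ (by the two propositions establishing $\mf H_{\mc H} \models \mc T_{L(\mc H)}$ and $\mf L_L \models \mc T_L$), it suffices to prove the chain of inclusions for an arbitrary model and then specialize.

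First I would record, for an arbitrary model $\mf G$ of $\mc T_{L(\mc H)}$, the inclusion $\ell(\mf G) \subseteq \ell(\mf L_{L(\mc H)})$, which is exactly the first of the three propositions (obtained by mapping a path $(a_0,\ldots,a_n)$ in $\mf G$ to the path $(e(a_0),\ldots,e(a_n))$ in $\mf L_{L(\mc H)}$, using $e(b) \leq e(a) \sas p$ whenever $M(a,p,b)$). Next, the second proposition gives $\ell(\mf L_{L(\mc H)}) \subseteq \ell(\mf H_{\mc H})$, by lifting a path of subspaces to a path of unit vectors chosen backwards. Finally, the third proposition gives $\ell(\mf H_{\mc H}) \subseteq \ell(\mf G)$, which is the only step using the atomicity argument (the element $e(a_{-1}) \sas \Vect(|\varphi_0\ket)$ is forced to equal $\Vect(|\varphi_0\ket)$ because the latter is an atom, and then each $e(a_k) = \Vect(|\varphi_k\ket)$ is an atom propagated forward). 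Concatenating these three inclusions yields
$$ \ell(\mf G) \subseteq \ell(\mf L_{L(\mc H)}) \subseteq \ell(\mf H_{\mc H}) \subseteq \ell(\mf G), $$
so all three coincide. Applying this in particular to $\mf G = \mf L_{L(\mc H)}$ and to $\mf G = \mf H_{\mc H}$ shows $\ell(\mf L_{L(\mc H)}) = \ell(\mf H_{\mc H})$, and hence every model $\mf G$ satisfies $\ell(\mf G) = \ell(\mf L_{L(\mc H)}) = \ell(\mf H_{\mc H})$.

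Since the three component propositions are already proved in the excerpt, there is no real obstacle remaining here: the theorem is a pure corollary. The only point deserving a word of care is making sure the cyclic chain of inclusions is closed up correctly — one must note that $\mf L_{L(\mc H)}$ and $\mf H_{\mc H}$ are legitimate models of $\mc T_{L(\mc H)}$ so that the first and third propositions apply to them, which is exactly what allows the inclusions to wrap around into equalities rather than merely a chain. If I were to single out where the genuine mathematical content lives, it is entirely in the third proposition (the $\ell(\mf H_{\mc H}) \subseteq \ell(\mf G)$ direction), which is where the finite-dimensional representation theorem \ref{corr:e_a} and Theorem \ref{theo:Kochen-Specker-Brunet} are actually invoked; the present theorem just packages the consequences.
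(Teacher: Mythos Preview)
Your proposal is correct and matches the paper's approach exactly: the paper presents Theorem~\ref{theo:one_language} with the preamble ``To summarize these results, we have shown'' and no further proof, so it is indeed treated as a pure corollary of the three preceding propositions chained into the cycle $\ell(\mf G) \subseteq \ell(\mf L_{L(\mc H)}) \subseteq \ell(\mf H_{\mc H}) \subseteq \ell(\mf G)$. One minor remark: your caveat that $\mf L_{L(\mc H)}$ and $\mf H_{\mc H}$ must themselves be models for the chain to close is not actually needed here, since the three propositions already give the full cycle for an arbitrary $\mf G$ directly (Propositions~1 and~3 are stated for any model $\mf G$, and Proposition~2 is a fixed inclusion between $\ell(\mf L)$ and $\ell(\mf H)$), so the equalities fall out immediately without any specialization.
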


This theorem states that the languages defined by the models of $\mc T_{L(\mc H)}$ do only depend on $\mc H$, as soon as $3 \leq \dim \mc H < \infty$. If we denote this language $\ell(\mc H)$, we have, considering $\mf L_{L(\mc H)}$ and $\mf H_{\mc H}$ respectively:
\begin{align*}
\fall {p_1, \ldots, p_n \in L(\mc H)} \quad p_1 p_2 \cdots p_n \in \ell(\mc H) & \iff p_1 \sas p_2 \sas \cdots \sas p_n \neq \bot \\ & \iff \Pi_{p_n} \Pi_{p_{n-1}} \cdots \Pi_{p_1} \neq 0
\end{align*}
In particular, considering the Hilbert model $\mf H_{\mc H}$ alone, a sequence of outcomes is possible if and only if one can assign a quantum state to the starting system so that the quantum state of the last system (obtained by successive orthogonal projections) is different from the null vector. To that respect, regarding the possibility of sequences of outcomes, the description of a quantum system provided by quantum states is complete, thus partially answering a decades-old question~\cite{Einstein35EPR}.

\section{Conclusion and Perspectives}

In this article, we have introduced a logical formulation of quantum mechanics based solely on the apparent behavior of the measurement process. The obtained logic, called $\mc T_L$, and the study of its models do provide a general way to investigate properties of models of quantum mechanics. 
For instance, in the case where the orthomodular lattice is the one associated to a Hilbert space $\mc H$ of dimension at least~$3$, we have shown in theorem~\ref{theo:Kochen-Specker-Brunet} that no model of $\mc T_{L(\mc H)}$ can have any of its elements verify more than one atomic property, thus ruling out a large class of hidden-variable models of quantum mechanics.

Moreover, if $\mc H$ is finite dimensional, it is possible to characterize precisely the structure of the collection of those properties verified by a state. A consequence of this result, studied in section~\ref{sec:sequences}, is that all the models of $\mc T_{L(\mc H)}$ make exactly the same predictions regarding whether a given sequence of outcomes is possible.


This article only initiates the study of the logic $\mc T_L$ and the ensuing approach of quantum mechanics. A first direction for future developments is the study of the possible extension of theorem~\ref{corr:e_a} to infinite dimensional Hilbert spaces. Another one would focus on the relation between the possibilistic approach developped in this article and the probabilistic one. Obviously, any probabilistic approach can lead to a possibilistic approach by considering whether the probability of an event is zero or one. But conversely, given a possibilistic model $\mf G = (A, M)$ of $\mc T_L$, how can one assign probabilities to the obtention of outcome $p$ in a state~$a \in A$?

A last important direction is the generalization of this approach to more complex settings. In the last part of this article, we have only considered sequences of outcomes, which correspond to a single quantum system being measured finitely many times in a row. An interesting and necessary generalization would be to consider directed acyclic graphs, corresponding to composite quantum systems. This would, in particular, provide a way to study entanglement and, hopefully, to help understand the notion of quantum state in a relativistic framework, with the vertices of a graph being associated to spacetime events.

\bibliographystyle{/Users/olivier/Library/texmf/bst/eptcs}

\begin{thebibliography}{10}
\providecommand{\bibitemdeclare}[2]{}
\providecommand{\surnamestart}{}
\providecommand{\surnameend}{}
\providecommand{\urlprefix}{Available at }
\providecommand{\url}[1]{\texttt{#1}}
\providecommand{\href}[2]{\texttt{#2}}
\providecommand{\urlalt}[2]{\href{#1}{#2}}
\providecommand{\doi}[1]{doi:\urlalt{http://dx.doi.org/#1}{#1}}
\providecommand{\bibinfo}[2]{#2}

\bibitemdeclare{article}{Bell64}
\bibitem{Bell64}
\bibinfo{author}{John~S. \surnamestart Bell\surnameend} (\bibinfo{year}{1964}):
  \emph{\bibinfo{title}{On the {E}instein-{P}odolsky-{R}osen {P}aradox}}.
\newblock {\sl \bibinfo{journal}{Physics}} \bibinfo{volume}{1}, pp.
  \bibinfo{pages}{195--200}.

\bibitemdeclare{article}{Brunet07PLA}
\bibitem{Brunet07PLA}
\bibinfo{author}{Olivier \surnamestart Brunet\surnameend}
  (\bibinfo{year}{2007}): \emph{\bibinfo{title}{A Priori Knowledge and the
  {K}ochen-{S}pecker Theorem}}.
\newblock {\sl \bibinfo{journal}{Physical Letters A}}
  \bibinfo{volume}{365}(\bibinfo{number}{1-2}), pp. \bibinfo{pages}{39--43},
  \doi{10.1016/j.physleta.2006.12.062}.

\bibitemdeclare{article}{Brunet:2009}
\bibitem{Brunet:2009}
\bibinfo{author}{Olivier \surnamestart Brunet\surnameend}
  (\bibinfo{year}{2009}): \emph{\bibinfo{title}{{P}artial {D}escription of
  {Q}uantum {S}tates}}.
\newblock {\sl \bibinfo{journal}{International Journal of Theoretical Physics}}
  \bibinfo{volume}{48}(\bibinfo{number}{3}), \doi{10.1007/s10773-008-9849-0}.

\bibitemdeclare{book}{NielsenChung2000Book}
\bibitem{NielsenChung2000Book}
\bibinfo{author}{Isaac~L. \surnamestart Chuang\surnameend} \&
  \bibinfo{author}{Michael~A. \surnamestart Nielsen\surnameend}
  (\bibinfo{year}{2000}): \emph{\bibinfo{title}{Quantum Computation and Quantum
  Information}}.
\newblock \bibinfo{publisher}{Cambridge}.

\bibitemdeclare{incollection}{DallaChiara2001QuantumLogic}
\bibitem{DallaChiara2001QuantumLogic}
\bibinfo{author}{Maria~Luisa \surnamestart Dalla~Chiara\surnameend} \&
  \bibinfo{author}{Roberto \surnamestart Giuntini\surnameend}
  (\bibinfo{year}{2001}): \emph{\bibinfo{title}{Quantum Logic}}.
\newblock In \bibinfo{editor}{D.~\surnamestart Gabbay\surnameend} \&
  \bibinfo{editor}{F.~\surnamestart Guenthner\surnameend}, editors: {\sl
  \bibinfo{booktitle}{Handbook of Philosophical Logic}}, \bibinfo{volume}{III},
  \bibinfo{publisher}{Kluwer}.

\bibitemdeclare{article}{Einstein35EPR}
\bibitem{Einstein35EPR}
\bibinfo{author}{Albert \surnamestart Einstein\surnameend},
  \bibinfo{author}{Boris \surnamestart Podolsky\surnameend} \&
  \bibinfo{author}{Nathan \surnamestart Rosen\surnameend}
  (\bibinfo{year}{1935}): \emph{\bibinfo{title}{Can quantum-mechanical
  description of physical reality be considered complete?}}
\newblock {\sl \bibinfo{journal}{Physical Review}} \bibinfo{volume}{47}, pp.
  \bibinfo{pages}{777--780}, \doi{10.1103/PhysRev.47.777}.

\bibitemdeclare{article}{RandallFoulis:Disaster}
\bibitem{RandallFoulis:Disaster}
\bibinfo{author}{David~J. \surnamestart Foulis\surnameend} \&
  \bibinfo{author}{Charles~H. \surnamestart Randall\surnameend}
  (\bibinfo{year}{1983}): \emph{\bibinfo{title}{Properties and Operational
  Propositions in Quantum Mechanics}}.
\newblock {\sl \bibinfo{journal}{Foundations of Physics}} \bibinfo{volume}{13},
  \doi{10.1007/BF01906271}.

\bibitemdeclare{}{Fritz:Possibilistic}
\bibitem{Fritz:Possibilistic}
\bibinfo{author}{Tobias \surnamestart Fritz\surnameend} (\bibinfo{year}{2009}):
  \emph{\bibinfo{title}{{P}ossibilistic {P}hysics}}.
\newblock \urlprefix\url{http://fqxi.org/community/forum/topic/569}.

\bibitemdeclare{}{PlatoKS}
\bibitem{PlatoKS}
\bibinfo{author}{Carsten \surnamestart Held\surnameend} (\bibinfo{year}{2013}):
  \emph{\bibinfo{title}{The {K}ochen-{S}pecker {T}heorem}}.
\newblock \urlprefix\url{http://plato.stanford.edu/entries/kochen-specker}.

\bibitemdeclare{book}{Hodges:Shorter}
\bibitem{Hodges:Shorter}
\bibinfo{author}{Wilfrid \surnamestart Hodges\surnameend}
  (\bibinfo{year}{1997}): \emph{\bibinfo{title}{A {S}hort {M}odel {T}heory}}.
\newblock \bibinfo{publisher}{Cambridge University Press}.

\bibitemdeclare{inbook}{Jaynes:Scramble}
\bibitem{Jaynes:Scramble}
\bibinfo{author}{Edwin~T. \surnamestart Jaynes\surnameend}
  (\bibinfo{year}{1990}): \emph{\bibinfo{title}{{C}omplexity, {E}ntropy and the
  {P}hysics of {I}nformation}}.
\newblock \bibinfo{publisher}{Addison-Wesley}.

\bibitemdeclare{article}{KochenSpecker67}
\bibitem{KochenSpecker67}
\bibinfo{author}{Simon \surnamestart Kochen\surnameend} \&
  \bibinfo{author}{Ernst~P. \surnamestart Specker\surnameend}
  (\bibinfo{year}{1967}): \emph{\bibinfo{title}{The {P}roblem of {H}idden
  {V}ariables in {Q}uantum {M}echanics}}.
\newblock {\sl \bibinfo{journal}{Journal of Mathematics and Mechanics}}
  \bibinfo{volume}{17}, pp. \bibinfo{pages}{59 -- 88}.

\bibitemdeclare{incollection}{Marker:Introduction}
\bibitem{Marker:Introduction}
\bibinfo{author}{David \surnamestart Marker\surnameend} (\bibinfo{year}{2000}):
  \emph{\bibinfo{title}{Introduction to Model Theory}}.
\newblock In: {\sl \bibinfo{booktitle}{Model Theory, Algebra and Geometry}},
  \bibinfo{volume}{39}, \bibinfo{publisher}{MSRI Publications}.

\bibitemdeclare{book}{Marker:ModelTheory}
\bibitem{Marker:ModelTheory}
\bibinfo{author}{David \surnamestart Marker\surnameend} (\bibinfo{year}{2002}):
  \emph{\bibinfo{title}{Model Theory: An Introduction}}.
\newblock {\sl \bibinfo{series}{Graduate Texts in Mathematics}}
  \bibinfo{volume}{217}, \bibinfo{publisher}{Springer}.

\bibitemdeclare{book}{Peres2002:QuantumTheory}
\bibitem{Peres2002:QuantumTheory}
\bibinfo{author}{Asher \surnamestart Peres\surnameend} (\bibinfo{year}{2002}):
  \emph{\bibinfo{title}{Quantum Theory: Concepts and Methods}}.
\newblock {\sl \bibinfo{series}{Fudamental Theories of
  Physics}}~\bibinfo{volume}{72}, \bibinfo{publisher}{Kluwer},
  \doi{10.1007/0-306-47120-5}.

\bibitemdeclare{article}{Rovelli96RQM}
\bibitem{Rovelli96RQM}
\bibinfo{author}{Carlo \surnamestart Rovelli\surnameend}
  (\bibinfo{year}{1996}): \emph{\bibinfo{title}{Relational {Q}uantum
  {M}echanics}}.
\newblock {\sl \bibinfo{journal}{International Journal of Theoretical Physics}}
  \bibinfo{volume}{35}, \doi{10.1007/BF02302261}.

\bibitemdeclare{book}{Svozil98Book}
\bibitem{Svozil98Book}
\bibinfo{author}{Karl \surnamestart Svozil\surnameend} (\bibinfo{year}{1998}):
  \emph{\bibinfo{title}{Quantum Logic}}.
\newblock \bibinfo{publisher}{Springer}.

\bibitemdeclare{}{PlatoQuantLog}
\bibitem{PlatoQuantLog}
\bibinfo{author}{Alexander \surnamestart Wilce\surnameend}:
  \emph{\bibinfo{title}{Quantum Logic and Probability Theory}}.
\newblock \urlprefix\url{http://plato.stanford.edu/entries/qt-quantlog/}.

\end{thebibliography}

\end{document}